\newcommand{\hide}[1]{}
\newcommand{\function}[2]{\colon #1 \rightarrow #2}
\newcommand{\setdef}[2]{\left\{ \hspace{0.5mm} #1 : \hspace{0.5mm} #2 \right\}}
\newcommand{\refeq}[1]{(\ref{eq:#1})}
 \newcommand{\Set}[1]{\big\{ #1 \big\}}
\newcommand{\barG}{\overline{G}}
\newcommand{\calH}{\ensuremath{\mathcal{H}}}
\newcommand{\calA}{\ensuremath{\mathcal{A}}}
\newcommand{\calK}{\ensuremath{\mathcal{K}}}
\newcommand{\calN}{\ensuremath{\mathcal{N}}}
\newcommand{\calR}{\ensuremath{\mathcal{R}}}
\newcommand{\tied}{\bowtie}
\newcommand{\circl}{\mathbb{C}}
\def\afterthmseparator{.}
\renewcommand{\@begintheorem}[2]{\trivlist
      \item[\hskip \labelsep{\bf #1\ #2\unskip\afterthmseparator}]\em}
\renewcommand{\@opargbegintheorem}[3]{\trivlist
      \item[\hskip \labelsep{\bf #1\ #2\ (#3)\unskip\afterthmseparator}]\em}
\newtheorem{theorem}{Theorem}[section]
\newtheorem{lemma}[theorem]{Lemma}
\newcommand{\bull}{\mbox{$\;\;\;$\vrule height .9ex width .8ex depth -.1ex}}
\newenvironment{proof}{\par\smallbreak\noindent{\bf Proof.~}}
{\unskip\nobreak\hfill \bull \par\medbreak}
\newenvironment{proofof}[1]{\par\smallbreak\noindent{\bf Proof of~#1.~}}
{\unskip\nobreak\hfill \bull \par\medbreak}
\newcommand{\Case}[2]{\smallskip\par\emph{Case #1: #2}}
\newcounter{claim}
\renewcommand{\theclaim}{\Alph{claim}}
{\par\smallskip\par}
{$\,\triangleleft$\par\medskip\par}
\newenvironment{bfenumerate}%
{\mbox{}\begin{enumerate}[label=\arabic*.,font=\upshape\bfseries,nolistsep]}{\end{enumerate}}
\newcounter{oq}
\newif\ifdrawcapoints
\tikzset{%
  ca/.is family,
  ca/.search also={/tikz},
  ca,
  cabasestyle/.style={},
  cabase/.initial=1cm,
  castep/.initial=.2cm,
  caanglestep/.initial=1,
  calevelsx/.initial=0,
  calevelsy/.initial=0,
  calevels/.style={calevelsx=#1,calevelsy=#1},
  at calevels/.style={},
  capoints/.is if=drawcapoints,
  every carc/.style={interval},
  start/.initial={},
  end/.initial={},
  startlevel/.initial=1,
  endlevel/.initial=1,
  level/.style={startlevel=#1,endlevel=#1},
  label/.initial={},
  labelpos/.initial=0.5,
  nodepos/.style={labelpos=#1},
  labelangle/.initial={},
  nodeangle/.style={labelangle=#1},
  every label/.style={inner sep=2pt},
  startlabel/.initial={},
  startlabelsep/.initial=\CAcirclesep,
  startlabelanchoroffset/.initial=-90,
  startlabelpos/.is choice,
  startlabelpos/cw/.style={startlabelanchoroffset=+90,startlabelsep=\CAcirclesep,startlabelposauto=false},
  startlabelpos/ccw/.style={startlabelanchoroffset=-90,startlabelsep=\CAcirclesep,startlabelposauto=false},
  startlabelpos/outside/.style={startlabelanchoroffset=+180,startlabelsep=1pt+1.75\pgflinewidth,startlabelposauto=false},
  startlabelpos/inside/.style={startlabelanchoroffset=,startlabelsep=1pt+1.75\pgflinewidth,startlabelposauto=false},
  startlabelposauto/.initial=true,
  every startlabel/.style={inner sep=2pt},
  endlabel/.initial={},
  endlabelsep/.initial=\CAcirclesep,
  endlabelanchoroffset/.initial=+90,
  endlabelpos/.is choice,
  endlabelpos/cw/.style={endlabelanchoroffset=+90,endlabelsep=\CAcirclesep,endlabelposauto=false},
  endlabelpos/ccw/.style={endlabelanchoroffset=-90,endlabelsep=\CAcirclesep,endlabelposauto=false},
  endlabelpos/outside/.style={endlabelanchoroffset=+180,endlabelsep=1pt+1.75\pgflinewidth,endlabelposauto=false},
  endlabelpos/inside/.style={endlabelanchoroffset=,endlabelsep=1pt+1.75\pgflinewidth,endlabelposauto=false},
  endlabelposauto/.initial=true,
  every endlabel/.style={inner sep=2pt},
}
\newenvironment{camodel}[1][]{
  \begin{scope}[ca,#1]
    \path[ca,cabasestyle] (0,0) circle(\pgfkeysvalueof{/tikz/ca/cabase});
    \path[ca,at calevels] (0,0) ellipse
    (\pgfkeysvalueof{/tikz/ca/cabase}+\pgfkeysvalueof{/tikz/ca/calevelsx}*\pgfkeysvalueof{/tikz/ca/castep}
    and \pgfkeysvalueof{/tikz/ca/cabase}+\pgfkeysvalueof{/tikz/ca/calevelsy}*\pgfkeysvalueof{/tikz/ca/castep});
  }{
  \end{scope}
}
\newcommand{\CArc}[1]{%
  \global\def\CAinterval{}%
  \begin{scope}[interval/.append code={\global\def\CAinterval{true}},ca,#1]%
    \pgfkeys{/tikz/ca/cabase/.get=\CAbase}%
    \pgfkeys{/tikz/ca/castep/.get=\CAstep}%
    \pgfkeys{/tikz/ca/start/.get=\CAstart}%
    \pgfkeys{/tikz/ca/end/.get=\CAend}%
    \pgfkeys{/tikz/ca/startlevel/.get=\CAstartlevel}%
    \pgfkeys{/tikz/ca/endlevel/.get=\CAendlevel}%
    \pgfkeys{/tikz/ca/label/.get=\CAlabel}%
    \pgfmathparse{\CAbase+\CAstep*\CAstartlevel}%
    \dimdef\CAstartradius{\pgfmathresult pt}%
    \pgfmathparse{\CAbase+\CAstep*\CAendlevel}%
    \dimdef\CAendradius{\pgfmathresult pt}%
    \pgfmathparse{\CAstart*\pgfkeysvalueof{/tikz/ca/caanglestep}}%
    \let\CAstart\pgfmathresult
    \pgfmathparse{\CAend*\pgfkeysvalueof{/tikz/ca/caanglestep}}%
    \let\CAend\pgfmathresult
    \ifdimgreater{\CAend pt}{\CAstart pt}{%
      \edef\tmpa{\pgfkeysvalueof{/tikz/ca/startlabelposauto}}%
      \ifdefstring{\tmpa}{true}{%
        \tikzset{ca,startlabelpos=cw}%
      }{}%
      \edef\tmpa{\pgfkeysvalueof{/tikz/ca/endlabelposauto}}%
      \ifdefstring{\tmpa}{true}{%
        \tikzset{ca,endlabelpos=ccw}%
      }{}%
    }{}%
    \pgfmathparse{\CAstart\pgfkeysvalueof{/tikz/ca/startlabelanchoroffset}}%
    \let\CAstartlabelanchor\pgfmathresult
    \pgfmathparse{\CAend\pgfkeysvalueof{/tikz/ca/endlabelanchoroffset}}%
    \let\CAendlabelanchor\pgfmathresult
    \edef\CAnode{\pgfkeysvalueof{/tikz/ca/labelangle}}%
    \ifdefempty{\CAnode}{%
      \edef\CAnode{\pgfkeysvalueof{/tikz/ca/labelpos}}%
      \pgfmathparse{\CAstart+(\CAend-\CAstart)*\CAnode}%
      \let\CAnode\pgfmathresult
    }{%
      \edef\tmpa{\pgfkeysvalueof{/tikz/ca/caanglestep}}%
      \pgfmathparse{\CAnode*\tmpa}%
      \let\CAnode\pgfmathresult
    }%
    \ifdimequal{\CAstart pt}{\CAend pt}{%
      \dimdef\CAcirclesep{1pt+1.75\pgflinewidth}%
      \fill (\CAstart:\CAstartradius)
      circle (1pt+1.75\pgflinewidth);
      \ifdefempty{\CAlabel}{}{%
        \ifcsstring{tikz@auto@anchor@direction}{left}{%
          \path
          (canvas polar cs:angle=\CAstart+1,radius=\CAstartradius-1pt-1.75\pgflinewidth) --
          (canvas polar cs:angle=\CAstart-1,radius=\CAstartradius-1pt-1.75\pgflinewidth)
          node[midway,auto,ca,every label] {\CAlabel};
        }{%
          \path
          (canvas polar cs:angle=\CAstart+1,radius=\CAstartradius+1pt+1.75\pgflinewidth) --
          (canvas polar cs:angle=\CAstart-1,radius=\CAstartradius+1pt+1.75\pgflinewidth)
          node[midway,auto,ca,every label] {\CAlabel};
        }%
      }%
    }{%
      \dimdef\CAcirclesep{0pt}%
      \ifdimequal{\CAstartradius}{\CAendradius}{%
        \draw[ca,every carc] (\CAstart:\CAstartradius)
        arc[start angle=\CAstart,end angle=\CAend,radius=\CAstartradius];
      }{%
        \draw[ca,every carc,-] plot[smooth,variable=\t,domain=\CAstart:\CAend]
        ({cos(\t)*(\CAstartradius+(\t-\CAstart)/(\CAend-\CAstart)*(\CAendradius-\CAstartradius))},
        {sin(\t)*(\CAstartradius+(\t-\CAstart)/(\CAend-\CAstart)*(\CAendradius-\CAstartradius))});
        \ifdefempty{\CAinterval}{}{%
          \draw (\CAstart:\CAstartradius-1pt-1.25\pgflinewidth) --
          (\CAstart:\CAstartradius+1pt+1.25\pgflinewidth);
          \draw (\CAend:\CAendradius-1pt-1.25\pgflinewidth) --
          (\CAend:\CAendradius+1pt+1.25\pgflinewidth);
        }
      }%
      \pgfmathparse{\CAstartradius+(\CAnode-\CAstart)/(\CAend-\CAstart)*(\CAendradius-\CAstartradius)}%
      \edef\CAnoderadius{\pgfmathresult pt}%
      \ifdefempty{\CAlabel}{}{%
        \path
        (canvas polar cs:angle=\CAnode+1,radius=\CAnoderadius) --
        (canvas polar cs:angle=\CAnode-1,radius=\CAnoderadius)
        node[midway,auto,ca,every label] {\CAlabel};
      }%
      \ifdrawcapoints
        \fill (canvas polar cs:angle=\CAnode,radius=\CAnoderadius) circle (1.5pt+\pgflinewidth);
      \fi
    }%
    \path (\CAstart:\CAstartradius) node[outer sep/.expanded=\pgfkeysvalueof{/tikz/ca/startlabelsep},anchor/.expanded=\CAstartlabelanchor,ca,every startlabel] {\pgfkeysvalueof{/tikz/ca/startlabel}};
    \path (\CAend:\CAendradius) node[outer sep/.expanded=\pgfkeysvalueof{/tikz/ca/endlabelsep},anchor/.expanded=\CAendlabelanchor,ca,every endlabel] {\pgfkeysvalueof{/tikz/ca/endlabel}};
  \end{scope}%
}
\newcommand{\carc}[5][]{%
  \CArc{start={#2},end={#3},level={#4},label={#5},#1}%
}
\tikzset{interval/.style={interval-interval,shorten >=-.5\pgflinewidth,shorten <=-.5\pgflinewidth}}
\title{Circular-arc hypergraphs:\protect\linebreak Rigidity via Connectedness}
\author{Johannes Köbler\qquad Sebastian Kuhnert\thanks{Supported by DFG grant  KO 1053/7--1.}\qquad
Oleg Verbitsky\thanks{%
Supported by DFG grant VE 652/1--1.
This work was initiated under support by the Alexander von Humboldt Fellowship.
On leave from the Institute for Applied Problems of Mechanics and Mathematics,
Lviv, Ukraine.}\\[2.5mm]
\normalsize
Humboldt-Universität zu Berlin,
Institut für Informatik\\
\normalsize
Unter den Linden 6,
10099 Berlin, Germany}
\date{}
\begin{document} 

\maketitle

\begin{abstract}
A \emph{circular-arc hypergraph} $\calH$ is a hypergraph admitting an \emph{arc ordering}, that is, 
a circular ordering of the vertex set $V(\calH)$
such that every hyperedge is an arc of consecutive vertices.
An arc ordering is \emph{tight} if, for any two hyperedges $A$ and $B$ such that
$\emptyset\ne A\subseteq B\ne V(\calH)$, the corresponding arcs share a common endpoint.
We give sufficient conditions
for $\calH$ to have, up to reversing, a unique arc ordering and a unique tight arc ordering.
These conditions are stated in terms of connectedness properties of~$\calH$.

It is known that $G$ is a proper circular-arc graph exactly when
its closed neighborhood hypergraph $\calN[G]$ admits a tight arc ordering.
We explore connectedness properties of $\calN[G]$ and prove that, if 
$G$ is a connected, twin-free, proper circular-arc graph
with non-bipartite complement $\barG$, then $\calN[G]$ has, up to reversing, a unique arc ordering.
If $\barG$ is bipartite and connected, then $\calN[G]$ has, up to reversing, 
two tight arc orderings. As a corollary, we notice that
in both of the two cases $G$ has an essentially unique intersection representation.
The last result also follows from the work by Deng, Hell, and Huang~\cite{DengHH96} based on
a theory of local tournaments.
\end{abstract}

\section{Introduction}\label{s:intro}

\subsection{Interval and circular-arc hypergraphs}

An \emph{interval ordering} of a hypergraph $\calH$ with a finite vertex set $V=V(\calH)$ is
a linear ordering $v_1,\ldots,v_n$ of $V$ such that every hyperedge of $\calH$ is an interval of
consecutive vertices. This notion admits generalization to an \emph{arc ordering} where
$v_1,\dots,v_n$ is \emph{circularly ordered} (i.e., $v_1$ succeeds $v_n$) 
so that every hyperedge is an \emph{arc} of consecutive vertices.

An \emph{interval hypergraph} is a hypergraph
admitting an interval ordering.
Similarly, if a hypergraph admits an arc ordering, we call it \emph{circular-arc} 
(using also the shorthand \emph{CA}).
In the terminology stemming from computational genomics,
interval hypergraphs are exactly those hypergraphs
whose incidence matrix has the \emph{consecutive ones property}; e.g.,~\cite{Dom09}.
Similarly, a hypergraph is CA exactly when its
incidence matrix has the \emph{circular ones property}; e.g.,~\cite{HM03,GPZ08,OBS11}.

Our goal is to study the conditions under which interval and circular-arc hypergraphs
are \emph{rigid} in the sense that they have a unique interval or, respectively, arc
ordering. Since any interval (or arc) ordering can be changed to another interval (or arc) 
ordering by reversing, we always mean uniqueness \emph{up to reversal}.
An obvious necessary condition of the uniqueness
is that a hypergraph has no \emph{twins}, that is, no two vertices such that
every hyperedge contains either both or none of them. 

We say that two sets~$A$ and~$B$ \emph{overlap} and write $A\between B$ if
$A$ and $B$ have nonempty intersection and neither of the two sets includes the other.
To facilitate notation, we use the same character $\calH$ to denote a hypergraph
and the set of its hyperedges.
We call $\calH$ \emph{overlap-connected} if it has  no isolated vertex
(i.e., every vertex is contained in a hyperedge) and the graph
$(\calH,\between)$ is connected.
As a starting point, we refer to the following rigidity result.

\begin{theorem}[Chen and Yesha~\cite{ChenY91}]\label{thm:unique-overlap-1}
A twin-free, overlap-connected interval hypergraph has, 
up to reversal, a unique interval ordering.
\end{theorem}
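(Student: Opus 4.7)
\emph{Proof plan.} The strategy is to show that any two interval orderings $\pi_1,\pi_2$ of $\calH$ either coincide or are reverses of each other, using overlap-connectedness to rigidify the global skeleton and twin-freeness to remove the remaining local freedom. The starting observation is local: whenever $A\between B$, the three sets $A\setminus B$, $A\cap B$, $B\setminus A$ must appear as three consecutive blocks in any interval ordering (since $A$ and $B$ are each intervals), so this decomposition is intrinsic to $\calH$ and the only free parameter is a direction. For each overlapping pair $\{A,B\}$ and each ordering $\pi$, I record a sign $\varepsilon_\pi(A,B)\in\{+,-\}$ indicating which of $A\setminus B$, $B\setminus A$ comes first.

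\emph{Propagation along the overlap graph.} I would then show that whenever two overlapping pairs $\{A,B\}$ and $\{B,C\}$ share a hyperedge $B$, the product $\varepsilon_\pi(A,B)\cdot\varepsilon_\pi(B,C)$ depends only on $\calH$ and not on $\pi$: this product encodes whether the sets $A\setminus B$ and $C\setminus B$ lie on the same side of $B$ or on opposite sides, which is dictated by how $A\cap B$ and $C\cap B$ sit inside $B$. A short case analysis (nested, disjoint, or overlapping inside $B$) yields the invariance. Because $(\calH,\between)$ is connected, fixing $\varepsilon_\pi$ on any single pair fixes it on every overlapping pair, so at most two global sign patterns are possible, related by a global flip. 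After replacing $\pi_2$ by its reverse if necessary, I may therefore assume $\varepsilon_{\pi_1}=\varepsilon_{\pi_2}$, which forces every hyperedge to occupy the same set of positions in $\pi_1$ and $\pi_2$.

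\emph{Reconstruction via twin-freeness.} Once hyperedges have identical position sets in both orderings, the set $S_i=\{A\in\calH : i\in\mathrm{pos}_\pi(A)\}$ is the same for $\pi\in\{\pi_1,\pi_2\}$, and it is exactly the signature of the vertex sitting at position $i$ in the corresponding ordering. Twin-freeness makes the signature map injective, so the vertex at position $i$ is the same in $\pi_1$ and $\pi_2$, yielding $\pi_1=\pi_2$. The fact that $\calH$ has no isolated vertex (part of overlap-connectedness) is used here to ensure every vertex has a nonempty signature to distinguish it.

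\emph{Main obstacle.} I expect the propagation step to be the crux: verifying ordering-invariance of $\varepsilon_\pi(A,B)\cdot\varepsilon_\pi(B,C)$ in every configuration of $A\cap B$ and $C\cap B$ within $B$. Delicate sub-cases arise when these two subsets themselves overlap or nest in complicated ways, and I may need to appeal to the overlap-connectedness of the induced sub-hypergraph on $B$ to fix their sides unambiguously. The reconstruction step is comparatively routine, but depends on having already completed the propagation.
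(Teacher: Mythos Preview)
Your plan is correct and is essentially the same argument the paper gives, just packaged differently. The paper phrases it as an induction on the number of hyperedges: starting from two overlapping hyperedges $A,B$ (whose positions are unique up to reflection), it shows that for any $H$ with $B\between H$ the arc $\rho(H)$ is forced by $\rho(A),\rho(B)$; to absorb twins that appear in subhypergraphs it works modulo ``permutation of twins'' (Lemma~\ref{lem:arcreprequi}), which plays exactly the role of your reconstruction step. Your sign function $\varepsilon_\pi(A,B)$ and the invariance of $\varepsilon_\pi(A,B)\cdot\varepsilon_\pi(B,C)$ are a clean reformulation of the same triple analysis.

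On your anticipated obstacle: it is easier than you fear. You do not need any connectedness of an induced sub-hypergraph on~$B$, and you should look at $A\setminus B$ and $C\setminus B$ rather than at $A\cap B$ and $C\cap B$. In any interval ordering with $A\between B$ and $B\between C$, the set $A\setminus B$ is an interval abutting one end of $B$, and likewise $C\setminus B$; hence $A$ and $C$ lie on the same side of $B$ if and only if $A\setminus B$ and $C\setminus B$ are comparable under inclusion (this is exactly the paper's condition~\refeq{ccw}). That is a purely set-theoretic condition, so the product $\varepsilon_\pi(A,B)\cdot\varepsilon_\pi(B,C)$ is ordering-independent with no further case split. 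The remaining step from ``all signs agree'' to ``all hyperedges occupy the same positions'' just needs the observation that matching signs pin down every hyperedge's interval up to a common translation, and the translation is zero because the union of all hyperedges is $[1,n]$ in both orderings (no isolated vertices, overlap-connected).
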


If we want to extend this result to CA hypergraphs,
the overlap-connectedness obviously does not suffice.
For example, the twin-free overlap-connected hypergraph $\calH=\big\{\{a,b\},\{a,b,c\},\{b,c,d\}\big\}$
has essentially different arc orderings. 
We, therefore, need a stronger
notion of connectedness. When $A$ and $B$ are overlapping subsets of $V$
(i.e.,~$A\between B$) that additionally satisfy $A\cup B\ne V$, we say that 
$A$ and~$B$ \emph{strictly overlap} and write $A\between^* B$. 

Quilliot~\cite{Quilliot84} proves that
a CA hypergraph~$\calH$ on~$n$ vertices has a unique
arc ordering if and only if for every set $X\subset V(\calH)$ with
$1<|X|< n-1$ there exists a hyperedge $H\in\calH$ such that $H\between^* X$.
Note that this criterion does not admit efficient verification
as it involves quantification over all subsets~$X$.

We call a hypergraph~$\calH$ 
\emph{strictly overlap-connected} if it has  no isolated vertex and the graph
$(\calH,\between^*)$ is connected.
We prove the following analog of Theorem~\ref{thm:unique-overlap-1}
for CA hypergraphs.

\begin{theorem}\label{thm:unique-overlap-2}
A twin-free, strictly overlap-connected CA hypergraph has, 
up to reversal, a unique arc representation.
\end{theorem}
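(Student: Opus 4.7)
The plan is to derive Theorem~\ref{thm:unique-overlap-2} from the criterion of Quilliot recalled above: a CA hypergraph on $n$ vertices has, up to reversal, a unique arc ordering if and only if every subset $X\subset V$ with $1<|X|<n-1$ is strictly overlapped by some hyperedge. So it suffices to show that a twin-free, strictly overlap-connected CA hypergraph $\calH$ satisfies this condition.

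Suppose for contradiction that some $X\subset V$ with $1<|X|<n-1$ is not strictly overlapped by any hyperedge, and set $Y=V\setminus X$, so that $|X|,|Y|\geq 2$. The assumption places every $H\in\calH$ into at least one of the four classes
\[
\calA=\{H:H\subseteq X\},\ \ \calB=\{H:H\subseteq Y\},\ \ \calC=\{H:X\subseteq H\},\ \ \calD=\{H:Y\subseteq H\}.
\]

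The central step is to analyze the strict-overlap graph $(\calH,\between^*)$ relative to this partition. A pairwise check shows that no strict-overlap edge connects $\calU=\calA\cup\calD$ to $\calV=\calB\cup\calC$: $\calA$- and $\calB$-members are disjoint, $\calA$-members lie inside $\calC$-members, $\calB$-members lie inside $\calD$-members, and the union of a $\calC$-member with a $\calD$-member is all of $V$, ruling out strict overlap. The possible bridges in $\calU\cap\calV$ lie in $\{\emptyset,X,Y,V\}$, and each of $X,Y,V$ is isolated in $(\calH,\between^*)$: by the standing assumption $X$ strictly overlaps no hyperedge, the same holds for $Y$ since $H\between^* X\Leftrightarrow H\between^* Y$, and for $V$ it is trivial because $V\cup H=V$. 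Twin-freeness together with $|V|\geq 4$ forces $|\calH|\geq 2$ (otherwise $\calH=\{V\}$ and all vertices are twins), so connectedness precludes any isolated vertex; hence $X,Y,V\notin\calH$ and $\calH$ lies entirely in $\calU$ or entirely in $\calV$.

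Without loss of generality, $\calH\subseteq\calA\cup\calD$. Pick $y_1,y_2\in Y$ using $|Y|\geq 2$; by twin-freeness some hyperedge $H\in\calH$ contains exactly one of them. But every $H\in\calA$ misses $Y$ and every $H\in\calD$ contains $Y$, so no $H\in\calA\cup\calD$ can separate $y_1$ from $y_2$---the desired contradiction. The main obstacle is the pairwise overlap analysis combined with the careful exclusion of $X,Y,V$ as bridges; the hypothesis $|X|,|Y|\geq 2$ is precisely what permits both the final twin-free contradiction on the $Y$-side and the forcing $|\calH|\geq 2$ that eliminates these bridges, while the rest of the argument is a direct bookkeeping of the four classes.
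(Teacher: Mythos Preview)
Your proof is correct, and it takes a genuinely different route from the paper's. The paper does \emph{not} invoke Quilliot's criterion; instead it proves Theorem~\ref{thm:unique-overlap-2} (together with Theorems~\ref{thm:unique-overlap-1} and~\ref{thm:unique}) via a single inductive argument on the number of hyperedges: any two hyperedges $A,B$ with $A\between^* B$ have, up to rotation and reflection, only one placement on the circle, and for a third hyperedge $H$ with $B\between^* H$ the arc $\rho(H)$ is pinned down by whether $A\setminus B$ and $H\setminus B$ are comparable under inclusion (this distinguishes the clockwise from the counter-clockwise intersection of $\rho(H)$ with $\rho(B)$). Your argument instead verifies Quilliot's separating-set condition directly, via the clean observation that a violating set $X$ forces the hyperedge family into $\calA\cup\calD$ or $\calB\cup\calC$, each of which fails to separate some pair of vertices in $Y$ (resp.\ $X$).

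What each approach buys: the paper's inductive proof is self-contained (it does not rely on Quilliot's result as a black box) and, more importantly, is set up so that the same case analysis simultaneously handles the tight-ordering statements of Theorem~\ref{thm:unique}, where inclusions between hyperedges must also be tracked. Your argument is shorter and more structural for this particular theorem, but it does not obviously extend to the tight setting, and it imports the nontrivial ``only if'' direction of Quilliot's criterion. One small remark: your parenthetical ``otherwise $\calH=\{V\}$'' tacitly uses the no-isolated-vertex clause in the definition of strict overlap-connectedness to force the single hyperedge to cover $V$; it may be worth saying this explicitly, though the conclusion $|\calH|\ge 2$ is correct either way (any singleton $\calH=\{H\}$ with $n\ge 4$ already creates twins).
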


\subsection{Tight orderings}

Let us use notation $A\tied B$ to say that sets $A$ and $B$
have a non-empty intersection. By the standard terminology,
a hypergraph $\calH$ is \emph{connected} if it has no isolated vertex
and the graph $(\calH,\tied)$ is connected.
Note that the assumption made in Theorem~\ref{thm:unique-overlap-1}
cannot be weakened just to connectedness;
consider $\calH=\big\{\{a\},\{a,b,c\}\big\}$ as the simplest example.
Thus, if we want to weaken the assumption,
we have also to weaken the conclusion.

Call an arc ordering of a hypergraph $\calH$ \emph{tight} if,
for any two hyperedges $A$ and $B$ such that
$\emptyset\ne A\subseteq B\ne V$,
the corresponding arcs share an endpoint.
The definition of a \emph{tight interval ordering}
is similar: We require that the arcs corresponding
to hyperedges $A$ and $B$ share an endpoint whenever $\emptyset\ne A\subseteq B$
(the condition $B\ne V$ is now dropped as the complete interval $V$ has two endpoints, while the
complete arc $V$ has none).\footnote{%
The class of hypergraphs admitting a tight interval ordering
is characterized in terms of forbidden subhypergraphs in~\cite{Moore77};
such a characterization of interval hypergraphs is given in~\cite{TrotterM76}.}

For nonempty $A$ and~$B$, note that
$A\tied B$ iff $A\between B$ or $A\subseteq B$ or $A\supseteq B$.
By similarity, we define
\[
A\tied^* B\text{ iff }A\between^* B\text{ or }A\subseteq B\text{ or }A\supseteq B
\]
and say that such two nonempty sets \emph{strictly intersect}.
We call a hypergraph~$\calH$ 
\emph{strictly connected} if it has  no isolated vertex and the graph
$(\calH,\tied^*)$ is connected.
In Section~\ref{s:hgs} we establish the following result.

\begin{theorem}\label{thm:unique}
\begin{bfenumerate}
\item 
A twin-free, connected hypergraph has, up to reversal,
at most one tight interval ordering.
\item 
A twin-free, strictly connected hypergraph has, up to reversal, 
at most one tight arc ordering.
\end{bfenumerate}
\end{theorem}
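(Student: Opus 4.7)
I would focus on Part 2 (the arc case); Part 1 follows by analogous, simpler reasoning in the linear setting (the wrap-around phenomena are absent). Let $\sigma$ and $\tau$ be two tight arc orderings of $\calH$. The overall plan is to propagate agreement between $\sigma$ and $\tau$ hyperedge by hyperedge along chains of strict intersections, anchored at a single chosen hyperedge.

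First I pick any hyperedge $H_0\in\calH$ with $|H_0|\ge 3$ (the degenerate cases $|H_0|\le 2$ are handled separately) and, after possibly reversing $\tau$, arrange that $\sigma$ and $\tau$ induce the same arc on $H_0$, fixing both the pair of arc endpoints and the direction. The heart of the proof is then a propagation lemma: if $\sigma$ and $\tau$ already agree on the arc of $A\in\calH$ and $B\in\calH$ satisfies $A\tied^*B$, then they also agree on the arc of $B$. I would prove this by case analysis. If $A\between^*B$, then $A\cap B$ and $A\cup B$ are both proper arcs in both orderings; agreement on $A$ locates $A\cap B$ as a unique sub-arc of $A$, one of whose endpoints is an endpoint of $A$ (where $B$ extends outward) and the other is an interior vertex of $A$ that must be an endpoint of $B$, pinning down this endpoint of $B$. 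If $A\subseteq B$ with $B\ne V$, tightness forces $A$ and $B$ to share an arc endpoint and agreement on $A$ selects which one; the case $B\subseteq A$ is immediate. With the lemma in hand, strict connectedness supplies, for every $H\in\calH$, a chain $H_0\tied^* H_1\tied^*\cdots\tied^* H_k=H$ along which agreement spreads. Once $\sigma$ and $\tau$ agree on every hyperedge, twin-freeness together with the fact that every vertex belongs to some hyperedge forces $\sigma=\tau$ on all of $V$.

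The main obstacle is the $A\between^*B$ case of the propagation lemma. Agreement on $A$ determines one endpoint of $B$ and the side of $A$ on which $B\setminus A$ extends, but not the internal cyclic order of $B\setminus A$. This order must be recovered by applying the propagation lemma again to hyperedges $C$ that strictly intersect $B$ and whose overlap with $B$ reaches into $B\setminus A$; strict connectedness of $\calH$ is precisely what guarantees enough such $C$ to resolve the entire arc of $B$. A secondary technical point is handling boundary cases (for example $B=V$, hyperedges of size $\le 2$, or the initialization step when no hyperedge has three elements); these are routine once the generic propagation is established. For Part 1, essentially the same argument is carried out with $\tied$ in place of $\tied^*$ and intervals in place of arcs, with tightness now applying to every nested pair $\emptyset\ne A\subseteq B$, which compensates for the weaker connectedness hypothesis.
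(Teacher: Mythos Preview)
Your propagation lemma, as stated, is false, and the gap is not repaired by the paragraph that follows it. Knowing that $\sigma$ and $\tau$ agree on the arc of a single hyperedge $A$ is not enough to force agreement on the arc of a neighbor $B$ with $A\tied^*B$: in every subcase ($A\between^*B$, $A\subset B$, $B\subset A$) there are exactly \emph{two} candidate arcs for $B$ compatible with the arc of $A$, corresponding to whether $B$ lies on the clockwise or counter-clockwise side of $A$. Concretely, take $V=\{a,b,c,d,e\}$, $A=\{a,b,c\}$, $B=\{c,d\}$; the orderings $a,b,c,d,e$ and $c,b,a,e,d$ (circularly) are both arc orderings of $\{A,B\}$, assign $A$ the same arc of positions, yet place $B$ on opposite sides. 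Your initialization step does not break this symmetry either: if ``agree on the arc of $H_0$'' means only that $\sigma(H_0)=\tau(H_0)$ as a set of positions with a fixed orientation, the ambiguity persists; if it means that the induced linear orders on $H_0$ coincide, you have no a~priori reason this can be achieved by a single reversal---that is part of what must be proved. Your proposed fix (``apply the propagation lemma again to hyperedges $C$'') is circular: it invokes the very lemma you are trying to establish, and strict connectedness does not by itself supply hyperedges $C$ of the specific shape you need.

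The paper's proof resolves exactly this two-fold ambiguity by keeping track of \emph{two} previously placed hyperedges rather than one. It runs an induction over strictly connected subhypergraphs $\calK\subseteq\calH$ (viewed as spanning subhypergraphs, so twins may appear), proving uniqueness of the representation up to symmetry \emph{and permutation of twins of $\calK$}. The base case $|\calK|=2$ absorbs the two-fold ambiguity into the reflection symmetry. For the inductive step one adds $H$ with $B\tied^*H$ for some $B\in\calK$; since $|\calK|\ge 2$ there is also $A\in\calK$ with $A\tied^*B$, and a case analysis on the mutual positions of $\rho(A),\rho(B),\rho(H)$ shows that the set-theoretic relations among $A,B,H$ single out one of the two candidate arcs for $\rho(H)$. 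The missing idea in your outline is precisely this use of a second anchor hyperedge to break the clockwise/counter-clockwise symmetry.
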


\subsection{The neighborhood hypergraphs of proper interval and proper circular-arc graphs}

For a vertex $v$ of a graph $G$, the set of vertices adjacent to $v$
is denoted by $N(v)$. Furthermore, $N[v]=N(v)\cup\{v\}$.
We define the \emph{closed neighborhood hypergraph} of~$G$ 
by $\calN[G]=\{N[v]\}_{v\in V(G)}$. 

Roberts~\cite{Roberts71} discovered that $G$ is a proper interval graph 
if and only if $\calN[G]$ is an interval hypergraph.
The case of proper circular-arc (PCA) graphs is more complex.\footnote{For a
  definition of proper interval and PCA graphs, see the beginning of
  Section~\ref{s:Nhgs}.}
If $G$ is a PCA graph, then $\calN[G]$ is a CA hypergraph.
The converse is not always true.
The class of graphs with circular-arc closed neighborhood hypergraphs,
known as \emph{concave-round graphs}~\cite{Bang-JHY00},
contains PCA graphs as a proper subclass.
Taking a closer look at the relationship between PCA graphs
and CA hypergraphs, Tucker~\cite{Tucker71}\footnote{%
Tucker~\cite{Tucker71} uses an equivalent language of matrices.}
distinguishes the case when
the complement graph $\barG$ is non-bipartite and shows that then
$G$ is PCA exactly when $\calN[G]$ is~CA.

Our interest in tight orderings has the following motivation.
In fact, $G$ is a proper interval graph 
if and only if the hypergraph $\calN[G]$ has a tight interval ordering
(this follows from the Roberts theorem and Lemma~\ref{lem:geomistight} in
Section~\ref{s:Nhgs}). Moreover, $G$ is a PCA graph 
if and only if $\calN[G]$ has a tight arc ordering
(we observed this in~\cite{fsttcs} based on Lemma~\ref{lem:geomistight}
and Tucker's analysis in~\cite{Tucker71}).

Now, it is natural to consider the connectedness properties of $\calN[G]$
for proper interval and PCA graphs and derive from here
rigidity results. For proper interval graphs this issue has been
studied in the literature earlier, but we discuss also this class of graphs for
expository purposes.

We call two vertices~$u$ and~$v$ of a graph $G$ \emph{twins} if $N[u]=N[v]$. 
Note that $u$ and~$v$ are twins in the graph $G$ if
and only if they are twins in the hypergraph $\calN[G]$.
Thus, the absence of twins in $G$ is a necessary condition for rigidity of $\calN[G]$.
Another obvious necessary condition is the connectedness of $G$ (and, hence, of
$\calN[G]$).\footnote{Small graphs are an exception, as all interval orderings
  of at most two vertices are the same up to reversal, and all arc orderings of
  up to three vertices are the same up to reversal and rotation.}
By Theorem~\ref{thm:unique}.1, if a proper interval graph $G$ is
twin-free and connected, then $\calN[G]$ has a unique tight interval ordering. 
Making the same assumptions, Roberts~\cite{Roberts71} proves that
even an interval ordering of $\calN[G]$ is unique.

Suppose now that $G$ is a PCA graph. Consider first the case 
when $\barG$ is non-bipartite. In Section~\ref{s:Nhgs} we prove
that then $\calN[G]$ is strictly connected. Theorem~\ref{thm:unique}.2
applies and shows that, if $G$ is also twin-free and connected, then
$\calN[G]$ has a unique tight arc ordering.
Moreover, we prove that any arc ordering of $\calN[G]$
is tight and, hence, unique as well.

If $\barG$ is bipartite, it is convenient to switch to
the complement hypergraph $\overline{\calN[G]}=\{V(G)\setminus N[v]\}_{v\in V(G)}$.
This hypergraph is interval. Applying Theorem~\ref{thm:unique}.1
to the connected components of $\overline{\calN[G]}$,
we conclude that $\calN[G]$ has, up to reversing, exactly two tight arc orderings
provided $\barG$ is connected.

In~\cite{KoeblerKLV11} we noticed that,
if a proper interval graph $G$ is connected, then the hypergraph
$\calN[G]\setminus\{V(G)\}$ is overlap-connected.
This allows to derive Roberts' aforementioned rigidity result
from Theorem~\ref{thm:unique-overlap-1}.
In Section~\ref{s:ov-conn}, we use Theorem~\ref{thm:unique-overlap-2}
to obtain a similar result for PCA graphs:
If $G$ is an $n$-vertex connected PCA graph with non-bipartite complement,
then removal of all $(n-1)$-vertex hyperedges from $\calN[G]$
gives a strictly overlap-connected hypergraph.

\subsection{Intersection representations of graphs}

A proper interval representation $\alpha$ of a graph $G$
determines a linear ordering of $V(G)$
accordingly to the appearance of the left (or, equivalently, right)
endpoints of the intervals $\alpha(v)$, $v\in V(G)$, in the intersection model.
We call it the \emph{geometric order} associated with $\alpha$.
Similarly, a PCA representation of $G$ determines the \emph{geometric}
circular order on the vertex set. Any geometric order is a tight
interval or, respectively, arc ordering of $\calN[G]$
(see Lemma~\ref{lem:geomistight}). The rigidity results
overviewed above imply that the geometric order is unique
for twin-free, connected proper interval graphs and
twin-free, connected PCA graphs with non-bipartite complement.
In Section~\ref{s:repr} we show that this holds true also
in the case of PCA graphs with bipartite connected complement.

Let us impose reasonable restrictions on proper interval and PCA
models of graphs. Specifically, we always suppose that a model of an $n$-vertex
graph has $2n$ points and consists of intervals/arcs that never
share an endpoint. It turns out that such intersection representations
are determined by the associated geometric order uniquely up to
reflection (and rotation in the case of arcs representations).
This implies that any twin-free, connected proper interval
or PCA graph has a unique intersection representation.

The last result is implicitly contained in the work by Deng, Hell, and Huang~\cite{DengHH96}, that relies on
a theory of local tournaments~\cite{Huang95}; see the discussion in the end of Section~\ref{s:repr}.

\section{Interval and circular-arc hypergraphs}\label{s:hgs}

Let $V=\{v_1,\dots,v_n\}$.
Saying that the sequence $v_1,\dots,v_n$
is \emph{circularly ordered}, 
we mean that $V$ is endowed with
the circular successor relation~$\prec$ under which
$v_i\prec v_{i+1}$ for $i<n$ and $v_n\prec v_1$.
An ordered pair of elements $a^-,a^+\in V$ determines an \emph{arc}~$A=[a^-,a^+]$
that consists of the vertices
appearing in the directed path from~$a^-$ to~$a^+$. 
This notation will be used under the assumption that $A\ne V$, 
though we also allow the 
\emph{complete arc} $A=V$ and the \emph{empty arc} $A=\emptyset$.
By $\circl_n$ we denote the set $\{1,\ldots,n\}$
endowed with the circular order $1\prec 2\prec \ldots\prec n\prec 1$.

We here prove Theorems~\ref{thm:unique-overlap-2} and~\ref{thm:unique}.
Since the proofs are very similar, we treat in detail the most
complicated of these statements, namely part~2 of Theorem~\ref{thm:unique},
and argue that the argument also covers the other statements (including also
Theorem~\ref{thm:unique-overlap-1}).

We will use an inductive argument. To establish the rigidity of
a connected hypergraph $\calH$, we prove this property for every
connected subhypergraph $\calK\subseteq\calH$ by induction
on the number of hyperedges in $\calK$.
Since subhypergraphs of a twin-free hypergraph $\calH$
can contain twins, we need an appropriate generalization
of our rigidity concept. To this end, we switch to an equivalent language.

An \emph{arc representation of a hypergraph~$\calH$} is a hypergraph isomorphism~$\rho$
from~$\calH$ to an arc system~$\calA$ on the circle~$\circl_n$.
Note that $\calH$ has an arc representation exactly when it admits an arc ordering~$\prec$.
Indeed, if $\rho\function{V(\calH)}{\{1,\ldots,n\}}$ is an arc representation of~$\calH$,
we can define $\prec_\rho$ by $\rho^{-1}(1)\prec_\rho\rho^{-1}(2)\prec_\rho\ldots\prec_\rho\rho^{-1}(n)\prec_\rho\rho^{-1}(1)$.
Conversely, if $v_1\prec v_2\prec\ldots\prec v_n\prec v_1$ is an arc ordering of~$\calH$,
then $\rho_\prec(v_i)=i$ is an arc representation of~$\calH$. 
Furthermore, we call $\rho$ \emph{tight} if the circular order of $\circl_n$
is tight for $\calA$. Obviously, tight arc representations correspond to
tight arc orderings and vice versa.
The notions of an \emph{interval representation},
corresponding to an interval ordering, is introduced similarly.

The rotations $x\mapsto(x+s)\bmod n+1$ and the reflection $x\mapsto n+1-x$
will be considered \emph{symmetries} of the circle~$\circl_n$.
The linearly ordered segment $\{1,\ldots,n\}$ has a unique symmetry, namely the reflection.
Note that, if $\rho$~is an interval or arc representation of~$\calH$
and $\sigma$~is a symmetry of the circle or the interval, respectively,
then the composition $\sigma\circ\rho$ is an 
interval or arc representation of~$\calH$ as well.
Turning back to the equivalence between arc representations and orderings,
note that while $\rho$ determines $\prec_\rho$ uniquely, $\prec$ determines $\rho_\prec$
up to a rotation.

Now, notice that $\calH$ admits a unique, up to reversing, (tight) interval ordering
if and only if $\calH$ has a unique, up to reflection, (tight) interval representation.
Furthermore, $\calH$ admits a unique, up to reversing, (tight) arc ordering
if and only if $\calH$ has a unique, up to reflection and rotation, (tight) arc representation.

A \emph{transposition of twins} in a hypergraph $\calH$
is a one-to-one map of~$V(\calH)$ onto itself
that fixes each vertex except, possibly, a pair of twins of~$\calH$.
Any composition~$\pi$ of transpositions of twins is called a \emph{permutation of twins}.
Note that, if $\rho$~is an interval or arc representation of~$\calH$,
then the composition $\rho\circ\pi$ is, respectively, an 
interval or arc representation of~$\calH$ as well.
We call two representations~$\rho$ and~$\rho'$ equivalent \emph{up to permutation of twins}
if $\rho'=\rho\circ\pi$ for some permutation of twins $\pi$.
For twin-free hypergraphs this relation just coincides with equality of representations.

\begin{lemma}\label{lem:arcreprequi}
Arc representations~$\rho$ and~$\rho'$ are equivalent up to permutation of twins
iff $\rho(H)=\rho'(H)$ for every hyperedge~$H\in\calH$.
\end{lemma}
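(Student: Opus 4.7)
The plan is to prove both directions directly from the definition of twins, since the key observation is that any transposition of twins fixes every hyperedge as a set. Nothing circular or topological enters the argument: although $\rho$ and $\rho'$ map into $\circl_n$, only the set-theoretic identity $\rho(H)=\rho'(H)$ matters.

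For the ``only if'' direction, I would first verify the basic observation that if $u$ and $v$ are twins in $\calH$, then every hyperedge contains either both of them or neither. Consequently, the transposition $\tau$ that swaps $u$ with $v$ and fixes all other vertices satisfies $\tau(H)=H$ for every $H\in\calH$. By induction on the number of factors, every permutation of twins $\pi$ enjoys the same property $\pi(H)=H$. Hence if $\rho'=\rho\circ\pi$, then for each $H$,
\[
\rho'(H)\;=\;\rho(\pi(H))\;=\;\rho(H).
\]

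For the ``if'' direction, I would set $\pi:=\rho^{-1}\circ\rho'$, which is a bijection of $V(\calH)$ onto itself. The hypothesis $\rho(H)=\rho'(H)$ immediately gives $\pi(H)=\rho^{-1}(\rho'(H))=\rho^{-1}(\rho(H))=H$ for every hyperedge $H$. Fix any vertex $v$ and put $w=\pi(v)$. For every hyperedge $H$ we have $v\in H$ iff $\pi(v)=w\in \pi(H)=H$, so $v$ and $w$ belong to exactly the same hyperedges. Thus either $w=v$ or $\{v,w\}$ is a pair of twins. Now decompose $\pi$ into disjoint cycles; by the previous observation, each cycle is entirely contained in a single twin-equivalence class of $\calH$. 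Since any permutation inside a twin class can be written as a product of transpositions, each of which swaps two twins and therefore qualifies as a transposition of twins, $\pi$ is a permutation of twins, and $\rho'=\rho\circ\pi$ as required.

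I do not anticipate a substantive obstacle: the argument is essentially a bookkeeping exercise using the characterization of twins as vertices with identical hyperedge-incidence. The only place that deserves a sentence of care is the cycle-decomposition step, where one must check that every cycle lies in one twin class (so that the subsequent decomposition into transpositions stays within that class and produces legitimate transpositions of twins).
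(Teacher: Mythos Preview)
Your proposal is correct and follows essentially the same route as the paper: both directions hinge on the observation that any transposition of twins fixes every hyperedge setwise, and for the converse both form the composite permutation $\pi=\rho^{-1}\circ\rho'$ (the paper writes it as $\rho'\circ\rho^{-1}$, acting on the image side) and argue that it preserves each twin-class. The only cosmetic difference is that the paper phrases the last step via the characterization ``$\pi$ is a permutation of twins iff $\pi(S)=S$ for every twin-class $S$'' (noting that twin-classes are Boolean combinations of hyperedges), whereas you reach the same conclusion through the cycle decomposition; the two formulations are interchangeable.
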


\begin{proof}
Suppose that $\rho'=\rho\circ\pi$ for a permutation of twins $\pi$.
Let $H$ be an arbitrary hyperedge of~$\calH$. Since $\tau(H)=H$ for any
transposition of twins $\tau$, we have $\pi(H)=H$ and, hence, $\rho(H)=\rho'(H)$.

To prove the claim in the backward direction, define
a \emph{twin-class} of a hypergraph~$\calH$ to be an inclusion-maximal subset~$S$ of~$V(\calH)$ such that
each hyperedge~$H\in\calH$ contains either all of~$S$ or none of it.
Thus, two vertices of~$\calH$ are twins iff they are in the same twin-class.
It follows that $\pi$~is a permutation of twins exactly when 
$\pi(S)=S$ for every twin-class $S$ of~$\calH$.

Suppose that $\rho(H)=\rho'(H)$ for every hyperedge~$H\in\calH$.
Since a twin-class is a Boolean combination of hyperedges,
we have $\rho(S)=\rho'(S)$ for every twin-class $S$ of~$\calH$ and,
therefore, $\rho'\circ\rho^{-1}(S)=S$ for all twin-classes.
It follows that $\rho'\circ\rho^{-1}$ is a permutations of twins,
hence $\rho$ and~$\rho'$ are equivalent up to permutation of twins.
\end{proof}

Let $\rho\function{V(\calH)}{\{1,\ldots,n\}}$ and $\rho'\function{V(\calH)}{\{1,\ldots,n\}}$
be arc or interval representations of~$\calH$. We call $\rho$ and~$\rho'$
equivalent \emph{up to symmetry and permutation of twins} if 
$\rho'=\sigma\circ\rho\circ\pi$ for some symmetry $\sigma$ and
permutation of twins $\pi$. This is an equivalence relation between
representations because both symmetries and permutations of twins form groups.

The following lemma translates Theorems~\ref{thm:unique-overlap-1},~\ref{thm:unique-overlap-2}, and~\ref{thm:unique}
into the language of interval/arc representations and generalizes these results
to hypergraphs with twins. Theorems~\ref{thm:unique-overlap-1},~\ref{thm:unique-overlap-2}, and~\ref{thm:unique}
follow from here immediately by the equivalence between representations and orderings.

\begin{lemma}\label{lem:unique}
\begin{bfenumerate}
\item
An overlap-connected interval hypergraph~$\calH$~has, 
up to symmetry (i.e.\ reflection) and permutation of twins, 
a unique interval representation.
\item
A strictly overlap-connected CA hypergraph~$\calH$~has, 
up to symmetry (i.e., reflection and rotation) and permutation of twins, 
a unique arc representation.
\item 
A connected hypergraph has, up to symmetry and permutation of twins,
at most one tight interval representation.
\item 
A strictly connected hypergraph has, up to symmetry
and permutation of twins, at most one tight arc representation.
\end{bfenumerate}
\end{lemma}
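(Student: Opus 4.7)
The plan is to focus on part~4, the most delicate of the four statements, and indicate briefly how the argument specializes for parts~1--3. Given a strictly connected hypergraph $\calH$ together with two tight arc representations $\rho,\rho'$, the goal is to exhibit a symmetry $\sigma$ of $\circl_n$ and a permutation of twins $\pi$ with $\rho' = \sigma\circ\rho\circ\pi$. By Lemma~\ref{lem:arcreprequi}, this reduces to finding a single symmetry $\sigma$ such that $\rho(H)=\sigma(\rho'(H))$ as subsets of $\circl_n$ for every hyperedge $H\in\calH$; once such $\sigma$ is in hand, $\rho$ and $\sigma\circ\rho'$ agree on every hyperedge as a set, and Lemma~\ref{lem:arcreprequi} supplies the desired~$\pi$.

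I would argue by induction on $|\calH|$. Strict connectedness lets us fix a spanning tree of $(\calH,\tied^*)$ and enumerate $\calH=\{H_1,\dots,H_m\}$ so that each $H_i$ with $i\ge 2$ satisfies $H_i\tied^* H_{j(i)}$ for some earlier index $j(i)<i$. The base case $i=1$ is immediate: all vertices in $H_1$ are mutually twins with respect to $\{H_1\}$, and so are all vertices outside, so $\rho$ and $\rho'$ differ only by a rotation aligning $\rho(H_1)$ with $\rho'(H_1)$ together with a permutation of twins. Inductively, assume a symmetry $\sigma$ has been produced with $\rho(H_k)=\sigma(\rho'(H_k))$ for every $k<i$. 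Write $K=H_{j(i)}$, $H=H_i$, $A=\sigma(\rho'(H))$, and $B=\rho(K)$.

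The inductive step proceeds by case analysis on the relation $H\tied^* K$. If $H\between^* K$, both $A$ and $\rho(H)$ are arcs having exactly one endpoint in the interior of $B$ and one in its exterior, and both satisfy the union constraint $A\cup B,\rho(H)\cup B\ne\circl_n$; this leaves only two candidate placements, related by the reflection~$\tau$ fixing the two endpoints of $B$. If $H\subsetneq K$ with $K\ne V(\calH)$, tightness forces both $A$ and $\rho(H)$ to share an endpoint with~$B$, again leaving two candidates related by the same reflection~$\tau$. The case $K\subsetneq H$ is symmetric (using tightness unless $H=V(\calH)$, where $A=\rho(H)=\circl_n$ automatically). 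Hence either $A=\rho(H)$ and we are done at step~$i$, or $A=\tau(\rho(H))$, in which case we would like to replace $\sigma$ by $\tau\circ\sigma$.

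The main obstacle is to verify that this replacement does not break the equalities $\rho(H_k)=\sigma(\rho'(H_k))$ for $k<i$. This is precisely the point at which strict connectedness is exploited: the expected argument is that if $\tau$ moved some earlier $\rho(H_k)$, then tracing the spanning-tree path from $H_k$ back to $K$ and applying the tight-arc rigidity produced at each edge of the path would already have forced a unique placement of $A$ at a previous step, contradicting the two-fold ambiguity invoked above. For parts~1 and~3, the symmetries of the linearly ordered segment reduce to the single reflection, so the case analysis collapses; for part~2 only the strict-overlap subcase arises, and the condition $\rho(H)\cup\rho(K)\ne\circl_n$ is encoded into the definition of $\between^*$, making the reflection correction simpler. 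In all four parts, the same spanning-tree induction is the common engine.
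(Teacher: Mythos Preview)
Your inductive framework is essentially the paper's, but your treatment of what you call the ``main obstacle'' is where the argument breaks down. At step $i\ge 3$, when $\rho(H)\ne A$ you propose to replace $\sigma$ by $\tau\circ\sigma$, where $\tau$ is the reflection fixing $B=\rho(K)$. In general $\tau$ will \emph{not} fix the earlier arcs $\rho(H_k)$, and your justification (``tracing the spanning-tree path from $H_k$ back to $K$ \ldots\ would already have forced a unique placement of $A$ at a previous step'') does not hold up: the tree path from $H_k$ to $K$ says nothing about $H=H_i$, the phrase ``at a previous step'' is incoherent since $A=\sigma(\rho'(H))$ is only defined at step~$i$, and even the hyperedge adjacent to $K$ on that path may itself be $\tau$-invariant as a set while $\rho(H_k)$ further along is not. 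No amount of path-tracing produces the needed conclusion without an additional ingredient.

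That ingredient, and the actual content of the paper's proof, is to anchor the new hyperedge against \emph{two} prior hyperedges rather than one. For $i\ge 3$ one can always find among $H_1,\dots,H_{i-1}$ a pair $A',B'$ with $A'\tied^* B'\tied^* H$ (take $B'=K$ and for $A'$ either the tree-parent of $K$, or, if $K=H_1$, any $H_\ell$ with $j(\ell)=1$). The heart of the proof is then a case analysis over the nine combinations in $\{A'\between^*B',\,A'\supset B',\,A'\subset B'\}\times\{B'\between^*H,\,B'\subset H,\,H\subset B'\}$, showing in each that the arc $\rho(H)$ is \emph{uniquely} determined by $\rho(A'),\rho(B')$ together with purely set-theoretic data (e.g.\ whether $A'\setminus B'$ and $H\setminus B'$ are comparable by inclusion, or whether $H\cap B'\subseteq A'$). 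With this in hand the symmetry $\sigma$ is fixed once and for all at the base case $i=2$---where the reflection ambiguity is genuine---and never altered thereafter; no ``replacement'' step is needed. Your proposal skips this case analysis, and without it the induction does not close. For parts~1 and~2 only the strict-overlap subcases of this analysis occur, and for part~3 only the configurations realizable on a line; this is how the four statements share a single proof.
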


\begin{proof}
We will prove part~4. The proof of the interval variant (part~3) is virtually the same
and even somewhat simpler as not all arc configurations can occur on the line (like one of the
dashed configurations in Fig.~\ref{fig:proof:unique2} that closes the circle).
Parts 1 and 2 are actually covered by the argument as well;
they correspond exactly to the case shown in Fig.~\ref{fig:proof:unique2}
as all the other cases involve inclusions.

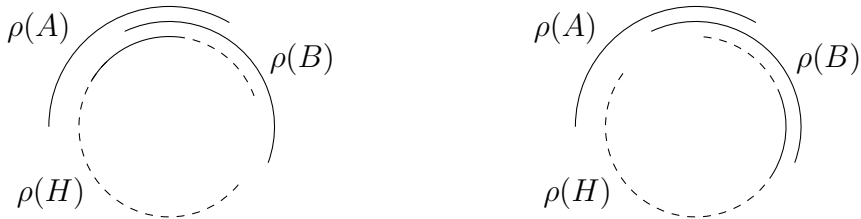
\begin{figure}
  \centering
  \begin{tikzpicture}
    \begin{camodel}[every carc/.style={}]
      \carc[labelpos=.33]{180}{60}{3}{$\rho(A)$}
      \carc[labelpos=.66]{115}{-20}{2}{$\rho(B)$}
      \carc{150}{85}{1}{}
      \carc[dashed]{85}{20}{1}{}
      \carc[dashed,labelangle=-150]{-40}{-230}{1}{$\rho(H)$}
    \end{camodel}
    \begin{camodel}[xshift=7cm,every carc/.style={}]
      \carc[labelpos=.33]{180}{60}{3}{$\rho(A)$}
      \carc[labelpos=.66]{115}{-20}{2}{$\rho(B)$}
      \carc[dashed]{85}{20}{1}{}
      \carc{20}{-30}{1}{}
      \carc[dashed,labelangle=-150]{-30}{-220}{1}{$\rho(H)$}
    \end{camodel}
  \end{tikzpicture}
  \caption{Proof of Lemma~\protect\ref{lem:unique}.4, case 1: $A\between^*B$ and
$\rho(B)$~intersects~$\rho(A)$ clockwise; $B\between^*H$.
On the left side:
 $\rho(H)$~intersects~$\rho(B)$ counter-clockwise.
On the right side:
 $\rho(H)$~intersects~$\rho(B)$ clockwise.}\label{fig:proof:unique2}
\end{figure}

Let us first explain the strategy of the proof.
A set of hyperedges $\calK\subseteq\calH$ will be regarded as
a \emph{spanning subhypergraph} of $\calH$, that is, $V(\calK)=V(\calH)$
(note that $\calK$ can have isolated vertices).
Given a hypergraph~$\calH$ with $n$ vertices,
we will prove that any strictly connected subhypergraph $\calK\subseteq\calH$ with $k$ hyperedges
has, up to symmetry and permutation of twins, a unique representation on the circle of size~$n$. 
This will be done by induction on~$k$. The base cases are $k=1,2$.
In order to make the inductive step, it suffices to show that, whenever $k\ge2$,
any representation~$\rho$ of~$\calK$ has, up to permutation of twins, a unique
extension to a representation of $\calK\cup\{H\}$,
for any $H\in\calH\setminus\calK$ such that $\calK\cup\{H\}$ is strictly connected.
By Lemma~\ref{lem:arcreprequi} this actually means to show that the whole arc~$\rho(H)$
(though not necessarily each point~$\rho(v)$ for $v\in H$) is uniquely determined.
Moreover, it suffices to do this job in the case of $k=2$.
The reason is that $\calK$ always contains two hyperedges~$A$ and~$B$ such that
the sequence $A,B,H$ forms a strictly connected path.

Before going into detail, we introduce some terminology.
Consider two arcs $[a^-,a^+]$ and $[b^-,b^+]$ and suppose that
$[b^-,b^+]\between^*[a^-,a^+]$ or $[b^-,b^+]\subset[a^-,a^+]$.
We will say that $[b^-,b^+]$ \emph{intersects} $[a^-,a^+]$ \emph{clockwise}
if $a^+\in[b^-,b^+]$ and \emph{counter-clockwise} if $a^-\in[b^-,b^+]$.

All possible positions of a single hyperedge~$A$ on the circle are congruent by rotation.
All possible positions of two strictly overlapping hyperedges~$A$ and~$B$
are congruent by rotation and reflection because the intersection of the corresponding
arcs~$\rho(A)$ and~$\rho(B)$ is always an arc of length $|A\cap B|$. 
If $A$ and~$B$ are comparable under
inclusion, recall that we only consider tight representations.

For the inductive step, consider three hyperedges~$A$, $B$, and~$H$
such that $A\tied^*B$ and $B\tied^*H$.
We have to show that the arc~$\rho(H)$ is completely determined by
the arcs~$\rho(A)$ and~$\rho(B)$.
The relation $B\tied^*H$ fixes the length of the intersection $\rho(H)\cap\rho(B)$ 
and, hence, leaves for~$\rho(H)$ exactly two possibilities depending on whether
this intersection is clockwise or counter-clockwise.

We split our analysis into three cases.

\Case1{$A\between^*B$.}
Without loss of generality, we suppose that $\rho(B)$ intersects~$\rho(A)$
clockwise; the case of counter-clockwise intersection is symmetric. 
Consider first the subcase in which $B\between^*H$.
Looking at the possible configurations
for the arc system $\{\rho(A),\rho(B),\rho(H)\}$, 
all shown in Fig.~\ref{fig:proof:unique2},
we see that $\rho(H)$ intersects~$\rho(B)$ counter-clockwise
exactly if the sets $A\setminus B$ and $H\setminus B$ are comparable under inclusion, i.e.,
\begin{equation}
  \label{eq:ccw}
A\setminus B\subseteq H\setminus B\text{\ \ or\ \ }H\setminus B\subseteq A\setminus B.
\end{equation}

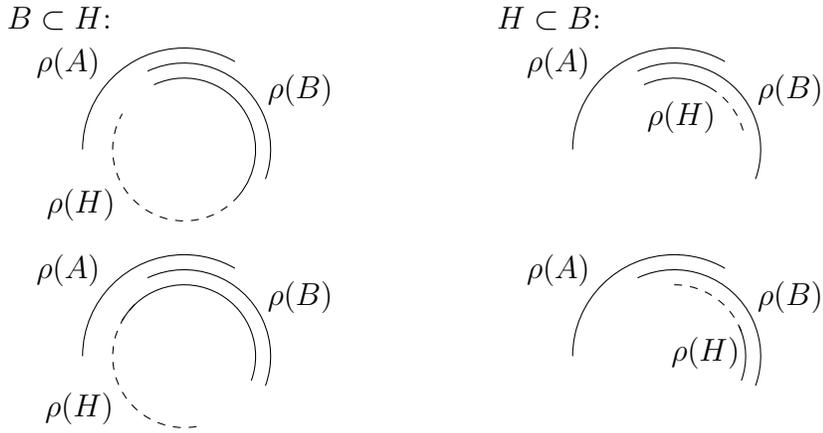
\begin{figure}
  \centering
  \begin{tikzpicture}[baseline=0cm]
    \begin{camodel}[every carc/.style={},cabase=.75cm]
      \node[anchor=west] at (-2.5cm,1.75cm) {$B\subset H$:};
      \carc[nodepos=.33]{180}{60}{3}{$\rho(A)$}
      \carc[nodepos=.66]{115}{-20}{2}{$\rho(B)$}
      \carc{115}{-40}{1}{}
      \carc[dashed,nodepos=.66]{-40}{-210}{1}{$\rho(H)$}
    \end{camodel}
    \begin{camodel}[yshift=-2.75cm,every carc/.style={},cabase=.75cm]
      \carc[nodepos=.33]{180}{60}{3}{$\rho(A)$}
      \carc[nodepos=.66]{115}{-20}{2}{$\rho(B)$}
      \carc{150}{-20}{1}{}
      \carc[dashed]{-80}{-230}{1}{$\rho(H)$}
    \end{camodel}
  \end{tikzpicture}\hfil
  \begin{tikzpicture}[baseline=0cm]
    \begin{camodel}[every carc/.style={},cabase=.75cm]
      \node[anchor=west] at (-2.5cm,1.75cm) {$H\subset B$:};
      \carc[nodepos=.33]{180}{60}{3}{$\rho(A)$}
      \carc[nodepos=.66]{115}{-20}{2}{$\rho(B)$}
      \carc{115}{60}{1}{}
      \carc[dashed,swap,nodepos=.25]{60}{15}{1}{$\rho(H)$}
    \end{camodel}
    \begin{camodel}[yshift=-2.75cm,every carc/.style={},cabase=.75cm]
      \carc[nodepos=.33]{180}{60}{3}{$\rho(A)$}
      \carc[nodepos=.66]{115}{-20}{2}{$\rho(B)$}
      \carc[swap]{25}{-20}{1}{$\rho(H)$}
      \carc[dashed]{90}{20}{1}{}
    \end{camodel}
  \end{tikzpicture}
  \caption{Proof of Lemma~\protect\ref{lem:unique}.4, case 1: $A\between^*B$ and
    $\rho(B)$ intersects~$\rho(A)$ clockwise;
$B$ and~$H$ are comparable under inclusion.}\label{fig:proof:unique4a}
\end{figure}

For the remaining subcases, when $B$ and~$H$ are comparable under inclusion,
all possible configurations for the arc system $\{\rho(A),\rho(B),\rho(H)\}$
are shown in Fig.~\ref{fig:proof:unique4a}. If $B\subset H$, 
we see that $\rho(B)$ intersects~$\rho(H)$ clockwise
exactly under the condition~\refeq{ccw}.
If $H\subset B$, then $\rho(H)$ intersects~$\rho(B)$ counter-clockwise iff
$A\cap B\subseteq H$.

\begin{figure}
  \centering
  \begin{tikzpicture}[baseline=0cm]
    \begin{camodel}[every carc/.style={},cabase=.75cm]
      \node[anchor=west] at (-2.25cm,1.75cm) {$B\between^*H$:};
      \carc[nodepos=.25]{180}{0}{2}{$\rho(A)$}
      \carc{90}{0}{3}{$\rho(B)$}
      \carc{135}{45}{1}{}
      \carc[dashed]{280}{135}{1}{$\rho(H)$}
    \end{camodel}
    \begin{camodel}[yshift=-2.75cm,every carc/.style={},cabase=.75cm]
      \carc[nodepos=.25]{180}{0}{2}{$\rho(A)$}
      \carc{90}{0}{3}{$\rho(B)$}
      \carc[labelangle=-30]{45}{-45}{1}{$\rho(H)$}
      \carc[dashed]{-45}{-215}{1}{}
    \end{camodel}
  \end{tikzpicture}\hfill
  \begin{tikzpicture}[baseline=0cm]
    \begin{camodel}[xshift=5cm,every carc/.style={},cabase=.75cm]
      \node[anchor=west] at (-2.25cm,1.75cm) {$B\subset H$:};
      \carc[nodepos=.25]{180}{0}{2}{$\rho(A)$}
      \carc{90}{0}{3}{$\rho(B)$}
      \carc{135}{0}{1}{}
      \carc[dashed]{280}{135}{1}{$\rho(H)$}
    \end{camodel}
    \begin{camodel}[xshift=5cm,yshift=-2.75cm,every carc/.style={},cabase=.75cm]
      \carc[nodepos=.25]{180}{0}{2}{$\rho(A)$}
      \carc{90}{0}{3}{$\rho(B)$}
      \carc[labelangle=-30]{90}{-45}{1}{$\rho(H)$}
      \carc[dashed]{-45}{-215}{1}{}
    \end{camodel}
  \end{tikzpicture}\hfill
  \begin{tikzpicture}[baseline=0cm]
    \begin{camodel}[xshift=10cm,every carc/.style={},cabase=.75cm]
      \node[anchor=west] at (-2.25cm,1.75cm) {$H\subset B$:};
      \carc[nodepos=.25]{180}{0}{2}{$\rho(A)$}
      \carc{90}{0}{3}{$\rho(B)$}
      \carc[swap]{55}{0}{1}{$\rho(H)$}
    \end{camodel}
  \end{tikzpicture}
  \caption{Proof of Lemma~\protect\ref{lem:unique}.4, case 2: $B\subset A$ and
    $\rho(B)$ intersects~$\rho(A)$ clockwise.}\label{fig:proof:unique4b}
\end{figure}
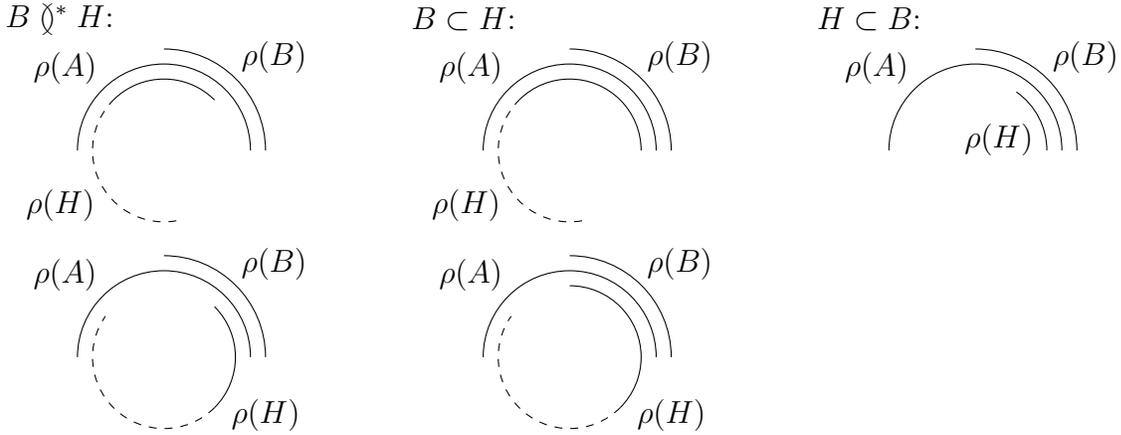

\Case2{$A\supset B$.}
Without loss of generality, we suppose that $\rho(B)$ intersects~$\rho(A)$
clockwise; see Fig.~\ref{fig:proof:unique4b}.
If $B\between^*H$, then $\rho(H)$ intersects~$\rho(B)$ counter-clockwise
exactly when the familiar condition~\refeq{ccw} holds true.
If $B\subset H$, then $\rho(B)$ intersects~$\rho(H)$ clockwise
exactly under the same condition. Thus, these two subcases do not differ much
from the corresponding subcases of Case 1.
If $H\subset B$, then $\rho(H)$~is forced to intersect~$\rho(B)$ clockwise
by the condition that $\{\rho(A),\rho(B),\rho(H)\}$ is a tight arc system.

\begin{figure}
  \centering
  \begin{tikzpicture}[baseline=0pt]
    \begin{camodel}[every carc/.style={},cabase=.75cm]
      \node[anchor=west] at (-2.25cm,1.75cm) {$B\between^*H$:};
      \carc[nodepos=.25]{180}{0}{2}{$\rho(B)$}
      \carc{90}{0}{3}{$\rho(A)$}
      \carc[labelpos=.4]{280}{135}{1}{$\rho(H)$}
      \carc[dashed]{135}{45}{1}{}
    \end{camodel}
    \begin{camodel}[yshift=-2.75cm,every carc/.style={},cabase=.75cm]
      \carc[nodepos=.25]{180}{0}{2}{$\rho(B)$}
      \carc{90}{0}{3}{$\rho(A)$}
      \carc{50}{-100}{1}{$\rho(H)$}
      \carc[dashed]{90}{50}{1}{}
    \end{camodel}
  \end{tikzpicture}\hfill
  \begin{tikzpicture}[baseline=0pt]
    \begin{camodel}[every carc/.style={},cabase=.75cm]
      \node[anchor=west] at (-2.25cm,1.75cm) {$H\subset B$:};
      \carc[nodepos=.25]{180}{0}{2}{$\rho(B)$}
      \carc{90}{0}{3}{$\rho(A)$}
      \carc[swap]{180}{130}{1}{$\rho(H)$}
      \carc[dashed]{130}{45}{1}{}
    \end{camodel}
    \begin{camodel}[yshift=-2.75cm,every carc/.style={},cabase=.75cm]
      \carc[nodepos=.25]{180}{0}{2}{$\rho(B)$}
      \carc{90}{0}{3}{$\rho(A)$}
      \carc[swap]{50}{0}{1}{$\rho(H)$}
      \carc[dashed]{135}{50}{1}{}
    \end{camodel}
  \end{tikzpicture}\hfill
  \begin{tikzpicture}[baseline=0pt]
    \begin{camodel}[every carc/.style={},cabase=.75cm]
      \node[anchor=west] at (-2.25cm,1.75cm) {$B\subset H$:};
      \carc[nodepos=.25]{180}{0}{2}{$\rho(B)$}
      \carc{90}{0}{3}{$\rho(A)$}
      \carc[nodepos=.25]{270}{0}{1}{$\rho(H)$}
    \end{camodel}
  \end{tikzpicture}
  \caption{Proof of Lemma~\protect\ref{lem:unique}.4, case 3: $A\subset B$ and
    $\rho(A)$ intersects~$\rho(B)$ clockwise.}\label{fig:proof:unique4c}
\end{figure}

\Case3{$A\subset B$.}
Without loss of generality, we suppose that $\rho(A)$ intersects~$\rho(B)$
clockwise; see Fig.~\ref{fig:proof:unique4c}.
If $B\between^*H$, then $\rho(H)$ intersects~$\rho(B)$ clockwise
iff $H\cap B\subseteq A$.
If $H\subset B$, then $\rho(H)$ intersects~$\rho(B)$ clockwise
iff the sets~$H$ and~$A$ are comparable under inclusion.
Finally, if $B\subset H$, then $\rho(B)$~is forced to intersect~$\rho(H)$ clockwise
by the tightness condition.
\end{proof}

\section{The neighborhood hypergraphs}
\label{s:Nhgs}

Let $\calA$ be a family of arcs on the circle $\circl_m$.
A bijection $\alpha\function{V(G)}{\calA}$ is an \emph{arc representation
of a graph $G$} if two vertices $u$ and $v$ are adjacent in $G$ exactly when
the arcs $\alpha(u)$ and $\alpha(v)$ intersect.
A representation $\alpha$ is \emph{proper} if $\alpha(u)\subseteq\alpha(v)$
for no two vertices $u$ and $v$. Restriction to \emph{intervals}
in the linearly ordered set $\{1,\ldots,m\}$ gives the notion of
a \emph{proper interval representation} of $G$.
Graphs having such intersection representations
are known as \emph{proper interval} and \emph{proper circular-arc (PCA) graphs}.

The aim of this section is to prove the rigidity properties for the closed
neighborhood hypergraphs of PCA graphs that will be given in
Theorem~\ref{thm:Nrigid}.

Recall that a proper arc (resp.\ interval)
representation $\alpha\function{V(G)}{\calA}$ of a graph~$G$
determines the circular (resp.\ linear) geometric order~$\prec_\alpha$ on the vertex set $V(G)$
accordingly to the appearance of the left (or, equivalently, right)
endpoints of the arcs $\alpha(v)$, $v\in V(G)$, in the circle $\circl_m$. 
The following lemma implies that the closed neighborhood hypergraph of any PCA
graph admits a tight arc ordering.

\begin{lemma}[see~\cite{fsttcs}]\label{lem:geomistight}
  The geometric order~$\prec_\alpha$ on~$V(G)$ associated with a proper
  arc (resp.\ interval) representation~$\alpha$ of a graph~$G$ is a tight arc
(resp.\ interval) ordering of the hypergraph~$\calN[G]$.
\end{lemma}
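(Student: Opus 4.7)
The plan is to work entirely with the endpoints of the arcs in $\alpha$. Write $\alpha(v)=[l_v,r_v]$ for $v\in V(G)$ and assume the $2n$ endpoints on $\circl_m$ are pairwise distinct (the standard cleanliness convention). A preliminary observation is that properness forces the circular order of the left endpoints $l_v$ to agree with that of the right endpoints $r_v$, since any inversion would produce two nested arcs. Consequently $\prec_\alpha$ may equivalently be read off from either family of endpoints.

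Next I would describe $N[v]$ as an arc in $\prec_\alpha$. For $u\ne v$, properness yields $\alpha(u)\cap\alpha(v)\ne\emptyset$ iff $l_u\in\alpha(v)$ or $l_v\in\alpha(u)$. Hence $N[v]$ is the union of $\{v\}$, the block of vertices $w$ immediately clockwise of $v$ with $l_w\in[l_v,r_v]$, and the block of vertices $w$ immediately counter-clockwise of $v$ whose arc contains $l_v$; contiguity of the second block uses the matching of left and right orders. The two blocks meet at $v$, so $N[v]$ is an arc $[v^-,v^+]$ of $\prec_\alpha$ with $v^+$ determined by $r_v$ (the vertex with the largest $l_w\preceq r_v$ reached clockwise from $v$) and $v^-$ determined by $l_v$ (the vertex with the smallest $r_w\succeq l_v$ reached counter-clockwise from $v$).

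For tightness, assume $\emptyset\ne N[u]\subseteq N[v]\ne V(G)$. The case $u=v$ is trivial; otherwise $u\in N[v]$ forces $\alpha(u)\cap\alpha(v)\ne\emptyset$, and properness forbids either arc from containing the other. Walking clockwise on $\circl_m$, the four endpoints $l_u,r_u,l_v,r_v$ thus appear in one of exactly two interleavings. In the configuration $l_v\prec l_u\preceq r_v\prec r_u$, one has $l_u\preceq l_{v^+}\preceq r_v\prec r_u$, so $v^+$ is a competitor in the maximization defining $u^+$, giving $u^+\succeq v^+$ in clockwise order from $u$; the inclusion $u^+\in N[u]\subseteq N[v]=[v^-,v^+]$ forces $u^+\preceq v^+$; hence $u^+=v^+$. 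The mirror interleaving gives $u^-=v^-$ by the symmetric argument. In either case the arcs $N[u]$ and $N[v]$ share an endpoint in the geometric order. The interval version is the same argument with clockwise and counter-clockwise replaced by right and left; note that the exclusion $B\ne V(G)$ in the arc definition is exactly what makes the endpoints $v^\pm$ well-defined, since a full-circle arc has none.

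The main obstacle is circular bookkeeping: verifying that only the two listed interleavings are possible under properness, and tracking in which direction ``clockwise from $u$'' is to be read relative to the reference point $v^+$. Once this is settled, the tightness check reduces to comparing the two explicit formulas for the endpoints $v^\pm$ with the corresponding formulas for $u^\pm$.
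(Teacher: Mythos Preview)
The paper does not actually prove this lemma; it is stated with the citation ``see~\cite{fsttcs}'' and no argument is given. So there is no in-paper proof to compare your attempt against, and your write-up has to stand on its own.

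Your outline is essentially correct, but there is one genuine gap. You assert that once $\alpha(u)$ and $\alpha(v)$ intersect and properness rules out containment, ``the four endpoints $l_u,r_u,l_v,r_v$ thus appear in one of exactly two interleavings.'' This is false as stated: properness alone leaves a \emph{third} cyclic configuration, namely $l_u,r_v,l_v,r_u$, in which $\alpha(u)\cup\alpha(v)$ covers the entire circle (each arc contains both endpoints of the other). Compare Lemma~\ref{lem:one-endpoint} later in the paper, which needs the extra hypothesis ``at most one universal vertex'' precisely to kill this case. In your setting the extra input is instead the hypothesis $N[v]\ne V(G)$: if $\alpha(u)\cup\alpha(v)$ is the full circle then every $l_w$ lies in $\alpha(u)$ or in $\alpha(v)$, hence $N[u]\cup N[v]=V(G)$, and together with $N[u]\subseteq N[v]$ this forces $N[v]=V(G)$, a contradiction. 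You hint at the role of $N[v]\ne V(G)$ only in connection with $v^\pm$ being well-defined, but you must also invoke it here to reduce to the two interleavings before your endpoint comparison goes through.

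With that correction made, the rest of your plan (matching left/right endpoint orders, describing $N[v]$ as the union of a clockwise block determined by $r_v$ and a counter-clockwise block determined by $l_v$, and then comparing $u^+$ with $v^+$ via the explicit ``last left endpoint before $r_v$'' description) is sound and yields the tightness claim.
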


For the remainder of this section, we will consider two subclasses of PCA
graphs, namely those with bipartite and those with non-bipartite complement.
 The \emph{complement of a graph~$G$} is the graph~$\barG$ with
 $V(\barG)=V(G)$
 such that two vertices are adjacent in~$\barG$
 if and only if they are not adjacent in~$G$. 

In what follows, we will repeatedly need the following property of
non-co-bipartite PCA graphs.
A vertex~$u$ is \emph{universal} if $N[u]=V(G)$.

\begin{lemma}\label{lem:nouniv}
  A PCA graph $G$ with non-bipartite complement
contains no universal vertex.
\end{lemma}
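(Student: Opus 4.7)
The plan is to prove the contrapositive: assuming $G$ is a PCA graph containing a universal vertex $u$, I will show that $\barG$ must be bipartite. To start, I would fix a proper arc representation $\alpha$ of $G$ on a circle $\circl_m$ and write $\alpha(u)=[a^-,a^+]$. Because $u$ is universal, $\alpha(v)\cap\alpha(u)\ne\emptyset$ for every $v\ne u$, and because $\alpha$ is proper, $\alpha(v)\not\subseteq\alpha(u)$ for every such~$v$.

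The key observation — and essentially the only step requiring care — is that for every $v\ne u$, the arc $\alpha(v)$ must contain at least one of the two endpoints $a^-,a^+$. Indeed, the set $\circl_m\setminus\{a^-,a^+\}$ decomposes into the interior of $\alpha(u)$ and the interior of the opposite arc; if $\alpha(v)$ contained neither endpoint then, being a single arc, it would lie entirely in one of these two pieces, contradicting either properness (if inside $\alpha(u)$) or the adjacency $u\sim v$ (if outside). This small combinatorial case analysis is the main obstacle, but it should be routine once spelled out.

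With this in hand I would define
\[
L=\{v\ne u:a^-\in\alpha(v)\}\qquad\text{and}\qquad R=\{v\ne u:a^+\in\alpha(v)\},
\]
so that $V(G)\setminus\{u\}=L\cup R$. Any two arcs in $L$ share the common point $a^-$ and therefore intersect, making $L$ a clique in~$G$; symmetrically, $R$ is a clique. Thus $V(G)\setminus\{u\}$ is covered by two cliques of~$G$ (assigning the vertices of $L\cap R$ arbitrarily to one side yields an honest partition), which means the complement restricted to $V(G)\setminus\{u\}$ is bipartite. Since $u$ is isolated in $\barG$, the full complement $\barG$ is then bipartite, contradicting the hypothesis and completing the proof.
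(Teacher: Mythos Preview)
Your argument is correct and is essentially the paper's own proof, just spelled out in more detail: the paper likewise fixes $\alpha(u)=[a^-,a^+]$ and covers $N[u]$ by the two cliques $\{x:a^-\in\alpha(x)\}$ and $\{x:a^+\in\alpha(x)\}$, concluding that $N[u]=V(G)$ would force $\barG$ to be bipartite. The only difference is that the paper states the covering without justification, whereas you supply the (correct) case analysis showing every neighbor's arc must contain an endpoint of~$\alpha(u)$.
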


\begin{proof}
Let~$\alpha\colon V(G)\to\calA$ be a
    proper arc representation of~$G$ and 
denote~$\alpha(u)=[a^-,a^+]$. 
Notice now that $N[u]$ is
    covered by two cliques $\setdef{x\in
      N[u]}{a^-\in\alpha(x)}$ and $\setdef{x\in
      N(u)}{a^+\in\alpha(x)}$,
which excludes $N[u]=V(G)$ as $\barG$ is not bipartite.  
\end{proof}

If $\prec$ is an arc ordering of the hypergraph $\calN[G]$ for
a non-co-bipartite PCA graph $G$, the closed neighborhood $N[u]$
of a vertex $u$ is an arc on the directed cycle $(V(G),\prec)$.
Lemma~\ref{lem:nouniv} allows us to use notation $N[u]=[u^-,u^+]$,
since the left endpoint $u^-$ and the right endpoint $u^+$
are uniquely determined.

In the following three lemmas, we establish several facts about arc orderings of
non-co-bipartite PCA graphs.

\begin{lemma}\label{lem:twocliques}
Let $G$ be a PCA graph with non-bipartite complement.
For any arc ordering $\prec$ of~$\calN[G]$,
every vertex $u\in V(G)$ has the following property:
$u$ divides $N[u]=[u^-,u^+]$ into two parts $[u^-,u]$~and~$[u,u^+]$
that both are cliques in~$G$.
\end{lemma}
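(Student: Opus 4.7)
The plan is to argue by contradiction. Assume there exist $v,w\in[u^-,u]\setminus\{u\}$ with $v\prec w$ and $v$ not adjacent to $w$ in $G$; I will show this is incompatible with $\prec$ being an arc ordering of $\calN[G]$, via Lemma~\ref{lem:nouniv}.

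\emph{Structural analysis of $N[v]$.} Since $N[v]$ is an arc containing $v$ and $u$ (the latter from $v\in N[u]$) but not $w$, and since $v\prec w\prec u$ in the circular forward order, the short arc $[v,u]$ forward would contain $w$. Hence $N[v]$ must contain the \emph{long-way} arc $[u,v]$ going forward. By Lemma~\ref{lem:nouniv}, $G$ has no universal vertex, so the gap $V(G)\setminus N[u]$ is nonempty and this long arc traverses $u^+$, every gap vertex, and $u^-$; in particular each of $u^-$, $u^+$, and every gap vertex lies in $N[v]$.

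\emph{Cornering $N[w]$.} From $u^-\in N[v]$ we obtain $v\in N[u^-]$, so the arc $N[u^-]$ contains $u^-,v,u$; the only arc with this property is the short arc $[u^-,u]$ (the long arc from $u$ forward to $u^-$ omits $v$), and since $w\in[u^-,u]$ this forces $w\in N[u^-]$, i.e., $u^-$ is adjacent to $w$ in $G$. The symmetric argument with $u^+$ yields that $u^+$ is adjacent to $w$. Consequently $N[w]$ is an arc containing $w,u,u^-,u^+$ but not $v$, which forces it to start at some vertex of $(v,w]$ and end at some vertex of $[u^-,v)$, wrapping forward through $w,u,u^+$, the gap, and $u^-$. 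A pincer argument then tightens $N[w]$ further: for any $t$ strictly between $u^-$ and $v$ in the forward order, $t\in N[u]\cap N[v]$ gives $u,v\in N[t]$; if $w\notin N[t]$, then $N[t]$ would have to be an arc containing $t,u,v$ but not $w$, which is impossible (the short arc $[t,u]$ contains $w$; the long arc $[u,t]$ skips $v$). Hence every such $t$ lies in $N[w]$; a symmetric argument in $(v,w)$ matches up the other endpoint of $N[w]$, and together they yield $N[v]\cup N[w]=V(G)$.

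\emph{The main obstacle: closing the contradiction.} The most delicate step is converting this rigid configuration into an outright contradiction. My approach is to apply the same analysis recursively at $u^-$: since $N[u^-]\supseteq[u^-,u]$, both $v$ and $w$ again lie on the same side of $u^-$ in its own $\prec$-neighborhood and are still non-adjacent, so the hypothesis of the lemma is inherited at $u^-$. Iterating along the chain $u,u^-,(u^-)^-,\dots$, in combination with the two-clique covering of $N[v]$ and $N[w]$ that the lemma itself supplies via a joint induction over $V(G)$ and with the covering relation $N[v]\cup N[w]=V(G)$, ultimately forces some vertex to be universal in $G$---contradicting Lemma~\ref{lem:nouniv}. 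The hypothesis that $\barG$ is non-bipartite enters precisely here through the appeal to Lemma~\ref{lem:nouniv}; in the bipartite-complement case the conclusion can genuinely fail, which is exactly why that case is handled separately later in the paper.
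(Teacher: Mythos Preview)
Your first observation is correct: since $N[v]$ is an arc containing $v$ and $u$ but not $w$, and $w$ lies on the short forward arc from $v$ to $u$, the complement $V(G)\setminus N[v]$ must be a sub-arc of $(v,u)$, so indeed $N[v]\supseteq[u,v]$ taken the long way round.

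The next step, however, contains a genuine error. You assert that any arc containing $u^-,v,u$ must contain the short arc $[u^-,u]$ and hence $w$. This is false. The complement of such an arc is a nonempty arc avoiding $u^-,v,u$; removing these three points from the circle leaves three open arcs, and one of them is $(v,u)\ni w$. Nothing prevents $V(G)\setminus N[u^-]$ from sitting inside $(v,u)$, in which case $w\notin N[u^-]$ even though $u^-,v,u\in N[u^-]$. The identical fallacy recurs in your pincer argument for $t\in(u^-,v)$, so the conclusion $N[v]\cup N[w]=V(G)$ is unsupported. The final paragraph then compounds the problem: ``iterating along the chain $u,u^-,(u^-)^-,\dots$'' together with an unspecified ``joint induction'' is not a proof; you never state an inductive hypothesis, explain why the chain terminates or cycles usefully, or show how a universal vertex is forced.

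The paper takes a different and much cleaner route. Each $x\in N[u]$ is classified as a \emph{close} or \emph{distant} neighbor of $u$ according to which half of $N[x]$ the vertex $u$ lies in, giving four sets $C^\pm[u]$, $D^\pm[u]$, each of which is seen to be a clique by a short arc argument. The heart of the proof is then a single explicit construction: if some $x\in D^+[u]$ exists, one covers $V(G)$ by two cliques assembled from pieces of $C^+[u],D^+[u],C^+[x],D^+[x]$, contradicting that $\barG$ is non-bipartite. No recursion or chasing of endpoints is needed.
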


\begin{proof}
    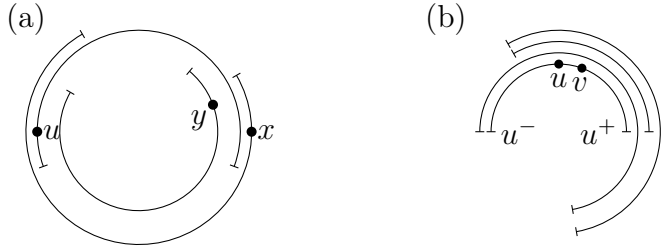
\begin{figure} \centering
      \begin{tikzpicture}[baseline=0cm]
  \node at (-1.5cm,1.5cm) {(a)};
    \begin{camodel}[cabase=.75cm,castep=.15cm,capoints=true]
          \carc[swap,nodeangle=20]{150}{410}{2}{$y$}
          \carc[swap,nodeangle=180]{-20}{200}{4}{$u$}
          \carc[nodeangle=0]{120}{390}{5}{$x$}
        \end{camodel}
      \end{tikzpicture}
\qquad\qquad
  \begin{tikzpicture}[baseline=0cm]
  \node at (-1.5cm,1.5cm) {(b)};
    \begin{camodel}[cabase=.75cm,castep=.15cm]
      \carc[startlabel=$u^-$,endlabel=$u^+$,startlabelpos=inside,endlabelpos=inside]{180}{0}{1}{}
      \carc[swap,nodeangle=90,every label/.style={inner sep=5pt}]{90}{90}{1}{$u$}
       \carc[swap,every label/.style={inner sep=5pt,anchor=80}]{70}{70}{1}{$v$}
      \carc{180}{-80}{2}{}
      \carc{120}{0}{3}{}
      \carc{120}{-80}{4}{}
    \end{camodel}
  \end{tikzpicture}
  \caption{(a) Proof of Lemma~\protect\ref{lem:twocliques}: Vertices
        $u\in V(G)$, $x\in D^+[u]$, and $y\in D^+[u]\cap[u,x]$ along
        with their neighborhoods in $(V(G),\prec)$.
\quad
(b) Proof of Lemma~\protect\ref{lem:CA-order}.3:
Possible mutual positions of~$N[u]$ and~$N[v]$. The most
    inward arc $[u^-,u^+]$ represents~$N[u]$; the other four arcs show
    possible positions of $[v^-,v^+]=N[v]$.}
\label{fig:geomistight+proof}
    \end{figure}
 Call a vertex $x\in N[u]$ a
    \emph{close neighbor} of~$u$ if
    \[
    x\in [u^-,u]\text{ and }u\in [x,x^+]\text{ or if }x\in [u,u^+]\text{ and
    }u\in [x^-,x]
    \]
    and a \emph{distant neighbor} otherwise. 
    Denote the sets of close neighbors of~$u$ in~$[u^-,u]$
    and~$[u,u^+]$ by~$C^-[u]$ and~$C^+[u]$,
    respectively. 
Similarly,  the sets of distant neighbors of~$u$ in~$[u^-,u]$
    and~$[u,u^+]$ will be denoted by~$D^-[u]$ and~$D^+[u]$. 
Each of the four sets~$C^-[u]$, $C^+[u]$, $D^-[u]$,
    and~$D^+[u]$ is a clique.  Indeed, if for example $x$ and~$y$ are
    two vertices in~$D^+(u)$, then $u$ belongs to both $[x,x^+]$
    and~$[y,y^+]$. Since $x$ and $y$ are both in~$[u,u^+]$, this
    implies that either $x$ belongs to~$[y,y^+]$ or $y$ belongs
    to~$[x,x^+]$ (depending on whether $y\in[u,x]$ or
    $x\in[u,y]$; see Fig.~\ref{fig:geomistight+proof}). 
Hence, $x$ and $y$ are adjacent and~$D^+(u)$ is a
    clique. The other three cases are similar.  

    To complete the proof we show that,
if $\barG$ is non-bipartite, then
    $D^-[u]=D^+[u]=\emptyset$.  Assume to the contrary that $D^+[u]$
    contains a vertex~$x$; see Fig.~\ref{fig:geomistight+proof} (the case $x\in
    D^-[u]$ is similar).  Since $u\in [x,x^+]$, the sets
    $[u,u^+]\cap[u,x]=[u,x]$ and $[x,x^+]\cap[x,u]=[x,u]$ cover $V(G)$. Splitting
    the former set into $C^+[u]\cap[u,x]$ and $D^+[u]\cap[u,x]$ and
    the latter into $C^+[x]\cap[x,u]$ and $D^+[x]\cap[x,u]$, consider
    the cover of~$V(G)$ by two sets
    $(C^+[u]\cap[u,x])\cup(D^+[x]\cap[x,u])$ and
    $(D^+[u]\cap[u,x])\cup(C^+[x]\cap[x,u])$ and show that they are
    cliques.  This will give us a contradiction since $\barG$~is not
    bipartite.  By symmetry, it suffices to prove that
    $(D^+[u]\cap[u,x])\cup(C^+[x]\cap[x,u])$ is a clique. Since both
    $D^+[u]$ and~$C^+[x]$ are cliques, we have to show that any
    vertex~$y$ in $D^+[u]\cap[u,x]$ is adjacent to all vertices in
    $C^+[x]\cap[x,u]$.  This is true because we have $
    N[y]\supseteq[y,u]\supseteq[x,u] $ by the definition of~$D^+[u]$.
\end{proof}

Lemma~\ref{lem:twocliques} allows us to derive the following lemma, which will
be needed for the main rigidity result of this section and also is of
independent interest.

\begin{lemma}\label{lem:anyistight}
If $G$ is a PCA graph with non-bipartite complement, 
then any arc ordering $\prec$ of~$\calN[G]$ is tight.
\end{lemma}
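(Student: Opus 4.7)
The plan is to proceed by contradiction: suppose an arc ordering $\prec$ of $\calN[G]$ is not tight, so there exist vertices $u,v$ with $\emptyset\ne N[u]\subseteq N[v]\ne V(G)$ whose corresponding arcs share no endpoint. By Lemma~\ref{lem:nouniv}, $G$ has no universal vertex, so every closed neighborhood is a proper arc and the notation $N[u]=[u^-,u^+]$ and $N[v]=[v^-,v^+]$ is unambiguous. The failure of tightness then means $u^-\ne v^-$ and $u^+\ne v^+$, and since $[u^-,u^+]$ is contained in the proper arc $[v^-,v^+]$, this forces $v^-$ to appear strictly before $u^-$ and $v^+$ strictly after $u^+$ along the arc $[v^-,v^+]$. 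In particular, neither $v^-$ nor $v^+$ belongs to $N[u]$.

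The key step is then to apply Lemma~\ref{lem:twocliques} to the vertex $v$, which decomposes $N[v]$ into two cliques $[v^-,v]$ and $[v,v^+]$. Because $u\in N[u]\subseteq N[v]$, the vertex $u$ lies in at least one of these cliques. If $u\in[v^-,v]$, then $u$ is adjacent to $v^-$ in $G$, so $v^-\in N[u]$, contradicting the first paragraph; the symmetric case $u\in[v,v^+]$ yields $v^+\in N[u]$, the same kind of contradiction. Hence tightness must hold for every such pair $N[u]\subseteq N[v]$.

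The main obstacle, such as it is, lies in the circular geometry of the first step rather than in the second: one must check carefully that non-sharing of endpoints together with the inclusion $N[u]\subseteq N[v]$ really pushes both $v^-$ and $v^+$ outside $[u^-,u^+]$, which uses essentially that $[v^-,v^+]$ is a proper arc (so one has a genuine ``outside''). Once this is in place, Lemma~\ref{lem:twocliques} finishes each of the two symmetric cases in one line, so I do not expect any serious difficulty beyond writing the geometric set-up cleanly.
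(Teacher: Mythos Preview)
Your argument is correct and matches the paper's proof almost exactly: both invoke Lemma~\ref{lem:twocliques} at the vertex~$v$ to place $u$ in one of the two cliques $[v^-,v]$ or $[v,v^+]$, and then infer that the corresponding endpoint $v^-$ or $v^+$ lies in $N[u]$, forcing a shared endpoint. The only cosmetic difference is that the paper proceeds directly rather than by contradiction, deducing $u^+=v^+$ (resp.\ $u^-=v^-$) from the containment $[v,v^+]\subseteq N[u]$ (resp.\ $[v^-,v]\subseteq N[u]$).
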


This complements Lemma~\ref{lem:geomistight}, which implies that there are tight
arc orderings of $\calN[G]$.

\begin{proof}
Let $N[u]\subseteq N[v]$.
Consider the arcs $N[u]=[u^-,u^+]$ and $N[v]=[v^-,v^+]$ w.r.t.~$\prec$.
Suppose that $u\in [v,v^+]$ (the case that $u\in [v^-,v]$ is symmetric).
By Lemma~\ref{lem:twocliques}, $[v,v^+]$ is a clique in $G$.
Therefore, this arc is contained in~$N[u]$, which implies
that $u^+=v^+$. 
\end{proof}

With the next lemma we show that all arc orderings for closed neighborhood
hypergraph of non-co-bipartite PCA graphs follow several simple patterns.

\begin{lemma}\label{lem:CA-order}
Let $G$ be a PCA graph with non-bipartite complement.
For any arc ordering $\prec$ of~$\calN[G]$ and any vertices $u,v\in V(G)$,
the following conditions are met.
  \begin{bfenumerate}
  \item $v\in [u,u^+]$ if and only if $u\in [v^-,v]$.
  \item If $v\in [u,u^+]$, then $v^-\in [u^-,u]$ and $u^+\in [v,v^+]$.
  \item If $u\prec v$ and these vertices are adjacent, then $u^-$, $v^-$,
    $u$, $v$, $u^+$, and $v^+$ occur under the
    order~$\prec$ exactly in this circular sequence, where some
    of the neighboring vertices except $u^-$~and~$v^+$ may coincide.
  Moreover, it is impossible that $v^+\prec u^-$.
  \end{bfenumerate}
\end{lemma}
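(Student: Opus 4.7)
The plan is to prove the three parts in order, with Lemmas~\ref{lem:nouniv} and~\ref{lem:twocliques} providing the main machinery. The recurring trick is: Lemma~\ref{lem:twocliques} says that the two halves $[u^-,u]$ and $[u,u^+]$ of each closed neighborhood are cliques, so whenever a clique-arc $[a,b]$ contains a vertex~$w$, we get $[a,b]\subseteq N[w]=[w^-,w^+]$. Combined with Lemma~\ref{lem:nouniv}, which forbids $N[w]=V(G)$, this rules out any configuration in which such a clique-arc would have to wrap past the exterior of some~$N[w]$.

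For Part~1, I would handle the forward direction; the reverse follows by swapping $u$ and~$v$. Assume $v\in[u,u^+]$ with $v\ne u$. Then $v$ is adjacent to~$u$, so $u\in N[v]=[v^-,v^+]$, placing $u$ either in $[v^-,v]$ (the desired conclusion) or in $(v,v^+]$. In the latter case, $[v,v^+]$ is a clique containing~$u$, so $[v,v^+]\subseteq N[u]$; but read clockwise, $[v,v^+]$ starts inside $[u,u^+]$, passes through~$u^+$, traverses the entire exterior $V(G)\setminus N[u]$, and only then reaches~$u$. The containment forces that exterior to be empty, contradicting Lemma~\ref{lem:nouniv}. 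For Part~2, Part~1 gives $u\in[v^-,v]$, so the clique $[v^-,v]$ contains~$u$ and hence lies in $N[u]$; in particular $v^-\in N[u]$. If $v^-\in(u,u^+]$ rather than $[u^-,u]$, then $[v^-,v]$ starts clockwise of~$u$ and can reach~$u$ only by wrapping past $u^+$ through $V(G)\setminus N[u]$, again contradicting Lemma~\ref{lem:nouniv}. The companion claim $u^+\in[v,v^+]$ follows from the symmetric argument that uses the clique $[u,u^+]\subseteq N[v]$ and wraps through $V(G)\setminus N[v]$.

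For Part~3, suppose $u\prec v$ with $u,v$ adjacent. Since $v$ is the immediate clockwise successor of~$u$ and lies in $N[u]$, we have $v\in[u,u^+]$; Parts~1 and~2 then supply $v^-\in[u^-,u]$, $u\in[v^-,v]$, and $u^+\in[v,v^+]$, which together pin down the circular order $u^-,v^-,u,v,u^+,v^+$ (with coincidences at adjacent positions permitted). The main obstacle is the ``moreover'' clause; I would argue by contradiction. If $v^+\prec u^-$, then no vertex lies strictly between $v^+$ and~$u^-$, and since $u\prec v$ also leaves no vertex strictly between $u$ and~$v$, we conclude $V(G)=[u^-,u]\cup[v,v^+]$. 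Both pieces are cliques by Lemma~\ref{lem:twocliques}, so $\barG$ decomposes into two independent sets and is bipartite --- contradicting the hypothesis. The same covering argument also rules out $u^-=v^+$, which justifies the exception stated before ``moreover''.
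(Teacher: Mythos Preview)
Your proof is correct and follows essentially the same line as the paper's: use Lemma~\ref{lem:twocliques} to identify the relevant clique arcs and then derive a contradiction with the non-bipartite hypothesis. The only cosmetic difference is that in Parts~1 and~2 you route the contradiction through Lemma~\ref{lem:nouniv} (a clique arc wrapping through the exterior of $N[u]$ forces $u$ to be universal), whereas the paper argues directly that two such cliques would cover $V(G)$ and make $\barG$ bipartite; for Part~3 the arguments coincide verbatim.
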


\begin{proof}
 {\bf 1.} Let $u\ne v$ and assume that $v\in [u,u^+]$. If $u\in
    [v,v^+]$, then Lemma~\ref{lem:twocliques} shows that 
$V(G)$ is covered by two cliques
    $[u,u^+]$~and~$[v,v^+]$, contradicting the assumption
    that~$\barG$~is not bipartite. Therefore, $u\in [v^-,v]$. The
    other implication follows by symmetry.
  
{\bf 2.} If the two conditions in~part~1 are true, then $v^-$,
    $u$, and $v$ occur in this circular order. Since $[v^-,v]$ is a
    clique, all vertices in $[v^-,u)$ are adjacent to~$u$ and hence,
    $v^-\in [u^-,u]$. The second containment follows
    by symmetry.
  
{\bf 3.} Parts~1~and~2 imply that $u^-$, $v^-$, $u$, $v$,
    $u^+$, and $v^+$ occur in this circular order; see Fig.~\ref{fig:geomistight+proof}.
However, it is still not excluded $v^+$~and~$u^-$  can coincide or be swapped.
We have to show that the condition $u\prec v$ rules out the last two possibilities
as well as the possibility of $v^+\prec u^-$. Indeed, any of these configurations
would give covering of $V(G)$ by two cliques $[u^-,u]$ and~$[v,v^+]$. 
\end{proof}

The following theorem allows us to invoke Theorem~\ref{thm:unique}.2 for the
closed neighborhood hypergraphs of connected non-co-bipartite PCA graphs,
proving that these hypergraphs have a unique tight arc ordering.

\begin{theorem}\label{thm:strict-conn}
  If $G$~is a connected PCA graph with non-bipartite complement,
then $\calN[G]$~is strictly connected.
\end{theorem}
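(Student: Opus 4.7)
The plan is to exploit that $\calN[G]$ is a CA hypergraph (by Tucker's theorem, using non-bipartiteness of $\barG$) and to $\tied^*$-connect any two neighborhoods along a $G$-path, routing each $G$-edge through the intermediate vertices of the arc ordering. Since $v\in N[v]$, the hypergraph $\calN[G]$ has no isolated vertex, so what remains is to show that $(\calN[G],\tied^*)$ is connected. We fix any arc ordering $\prec$ of $\calN[G]$. Because $G$ is connected, it suffices to establish the local claim: for every edge $ab$ of $G$, the hyperedges $N[a]$ and $N[b]$ lie in the same $\tied^*$-component.

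A direct attempt to show $N[a]\tied^* N[b]$ may fail: when $a$ and $b$ are far apart in the circular order, $N[a]\cup N[b]$ can equal $V(G)$ although neither neighborhood contains the other. The remedy is to insert intermediate vertices. Assuming without loss of generality that $b\in[a,a^+]$ (by Lemma~\ref{lem:CA-order}.1; otherwise swap $a$ and $b$), enumerate the vertices of the arc $[a,b]$ in circular order as $a=w_0\prec w_1\prec\cdots\prec w_d=b$. Since $[a,b]\subseteq[a,a^+]$ and $[a,a^+]$ is a clique by Lemma~\ref{lem:twocliques}, every consecutive pair $w_j,w_{j+1}$ forms a $G$-edge, while $w_j\prec w_{j+1}$ are immediate circular successors by construction.

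Lemma~\ref{lem:CA-order}.3 then applies to each such pair and yields, besides the circular arrangement $w_j^-,w_{j+1}^-,w_j,w_{j+1},w_j^+,w_{j+1}^+$, the assertion $w_{j+1}^+\not\prec w_j^-$. Reading $\prec$ as the immediate-successor relation, this means $w_j^-$ is not the immediate circular successor of $w_{j+1}^+$, and since also $w_j^-\neq w_{j+1}^+$, at least one vertex sits strictly between $w_{j+1}^+$ and $w_j^-$; thus $N[w_j]\cup N[w_{j+1}]=[w_j^-,w_{j+1}^+]\neq V(G)$. Combined with $\{w_j,w_{j+1}\}\subseteq N[w_j]\cap N[w_{j+1}]$, this yields $N[w_j]\tied^* N[w_{j+1}]$, whether the two neighborhoods are nested or strictly overlap. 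Chaining along $w_0,\ldots,w_d$ proves the local claim, and concatenating along a $G$-path between any two vertices connects their neighborhoods in $(\calN[G],\tied^*)$.

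The main obstacle lies in recognizing that the naive attempt $N[a]\tied^* N[b]$ for an edge $ab$ of $G$ need not succeed, so the argument must route through the circular-order-consecutive intermediaries, where both the clique structure guaranteed by Lemma~\ref{lem:twocliques} and the final clause of Lemma~\ref{lem:CA-order}.3 conspire to give $\tied^*$ at every step. The technical crux is the careful interpretation of ``$v^+\not\prec u^-$'' as forbidding $u^-$ from being the immediate circular successor of $v^+$, which---combined with $u^-\neq v^+$---produces the missing vertex outside $N[u]\cup N[v]$ and thereby upgrades $\between$ to $\between^*$.
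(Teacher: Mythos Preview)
Your proof is correct and hinges on the same key step as the paper's: Lemma~\ref{lem:CA-order}.3 applied to circularly consecutive adjacent vertices $u\prec v$ yields $N[u]\tied^* N[v]$. The paper's route is more direct, however. Rather than routing each $G$-edge $ab$ through clique intermediaries and then stitching along a $G$-path, it observes that connectedness of $G$ forces at most one circularly consecutive pair $x\prec_\alpha y$ to be non-adjacent (two such gaps would disconnect $G$), so the circular order itself furnishes a Hamiltonian $G$-path $v_1,\ldots,v_n$ with $v_i\prec_\alpha v_{i+1}$ adjacent for every $i<n$; one application of Lemma~\ref{lem:CA-order}.3 per consecutive pair then strings all of $\calN[G]$ into a single $\tied^*$-path. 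Your detour through Lemma~\ref{lem:twocliques} is sound but unnecessary. (A minor aside: you cite Tucker's theorem for the existence of an arc ordering, but that is the converse direction; here the arc ordering follows directly from $G$ being PCA, e.g.\ via Lemma~\ref{lem:geomistight}.)
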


\begin{proof}
  Let $\alpha$ be a proper arc representation of~$G$.
By Lemma~\ref{lem:geomistight}, $\calN[G]$~has an arc 
(geometric) ordering $\prec_\alpha$.
Since the complement of $G$ is not bipartite,
$G$ has at least two vertices and, by Lemma~\ref{lem:nouniv}, no universal vertex.
Since $G$~is connected, there is at most one pair of non-adjacent vertices~$x$ and~$y$
satisfying the relation $x\prec_\alpha y$. Therefore, all vertices of~$G$
can be arranged into a path $v_1,\ldots,v_n$ such that $v_i$ and~$v_{i+1}$
are adjacent and $v_i\prec_\alpha v_{i+1}$ for every $1\le i<n$.
By Lemma~\ref{lem:CA-order}.3, we have $N[v_i]\tied^* N[v_{i+1}]$,
which gives us a strictly connected path passing through all hyperedges of~$\calN[G]$.
\end{proof}

Now we are ready to prove our rigidity result for the closed neighborhood
hypergraphs of PCA graphs.

\begin{theorem}\label{thm:Nrigid}
Let $G$ be a twin-free, connected PCA graph.
\begin{bfenumerate}
\item 
If $\barG$ is non-bipartite, then $\calN[G]$ has, up to reversal,
a unique arc ordering.
\item 
If $\barG$ is bipartite and connected, then $\calN[G]$ has, up to reversal,
exactly two tight arc orderings.
\end{bfenumerate}
\end{theorem}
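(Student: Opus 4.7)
For Part~1, three earlier results combine. Lemma~\ref{lem:geomistight} applied to any proper arc representation of~$G$ supplies a tight arc ordering of~$\calN[G]$. By Theorem~\ref{thm:strict-conn}, $\calN[G]$ is strictly connected, so Theorem~\ref{thm:unique}.2 forces this tight arc ordering to be unique up to reversal. Lemma~\ref{lem:anyistight} then shows every arc ordering of~$\calN[G]$ is automatically tight, and so arc orderings of~$\calN[G]$ are unique up to reversal.

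For Part~2, let $X,Y$ be the parts of the bipartition of~$\barG$; both are non-empty (since $\barG$ is connected on $\ge 2$ vertices), which also implies that $G$ has no universal vertex. Since every edge of~$\barG$ crosses the bipartition, $V(G)\setminus N[v]=N_{\barG}(v)\subseteq Y$ for every $v\in X$ and symmetrically for $v\in Y$. Hence the hyperedges of $\overline{\calN[G]}$ split into $\calC_1=\Set{V(G)\setminus N[v]:v\in X}$ on vertex set~$Y$ and $\calC_2=\Set{V(G)\setminus N[v]:v\in Y}$ on vertex set~$X$. I would check that each $\calC_i$ inherits twin-freeness from~$G$ and is connected (connectedness of~$\calC_1$ reduces to connectedness of the graph on~$X$ in which $v,v'$ are linked iff they share a $\barG$-neighbor, which follows from the bipartite connectedness of~$\barG$ via alternating paths). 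To see that each $\calC_i$ is interval, take an arc ordering of~$\calN[G]$ from Lemma~\ref{lem:geomistight} and cut the circle between any consecutive $X$-vertex and $Y$-vertex; no hyperedge of $\overline{\calN[G]}$ spans the cut, as each is confined to~$X$ or~$Y$. Restricting to~$Y$ (resp.\ $X$) yields an interval ordering of~$\calC_1$ (resp.\ $\calC_2$), which is tight because the shared endpoints required by tightness of~$\calN[G]$ lie within~$Y$ (resp.\ $X$). Theorem~\ref{thm:unique}.1 therefore gives that each $\calC_i$ has a unique tight interval ordering up to reversal.

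The key geometric step is to show that in every tight arc ordering of~$\calN[G]$ both~$X$ and~$Y$ form contiguous arcs of the circle. If~$Y$, say, splits into $k\ge 2$ runs separated by $X$-vertices, then since $V\setminus N[v]\subseteq Y$ is an arc for $v\in X$, it fits in a single $Y$-run, and symmetrically every $w\in Y$ has its $\barG$-neighbors in a single $X$-run. Matching these two assignments confines the $\barG$-edges to blocks indexed by compatible run-pairs, so every $\barG$-connected component lies within one block; the connectedness of~$\barG$ then forces a single block, contradicting $k\ge 2$. Given contiguity, any tight arc ordering of~$\calN[G]$ restricts to a tight interval ordering of each~$\calC_i$; conversely, gluing a tight interval ordering of~$\calC_1$ with one of~$\calC_2$ on the circle (with~$X$ followed by~$Y$) yields a tight arc ordering of~$\calN[G]$, since each $N[v]$ contains $X$ or $Y$ entirely as a clique and thus remains an arc, and the endpoints of~$V\setminus N[v]$ in the interval ordering become the endpoints of the arc~$N[v]$. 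Combining $2$ choices for each~$\calC_i$ gives $4$ tight arc orderings of~$\calN[G]$; global reversal (which reverses both blocks simultaneously) identifies them in pairs, yielding exactly two equivalence classes. The chief obstacle I foresee is the contiguity argument, which needs careful bookkeeping of how $X$- and $Y$-runs alternate and how $\barG$-edges are confined to block pairs.
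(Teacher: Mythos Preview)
Your Part~1 is identical to the paper's proof.

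For Part~2 you take the same route as the paper: pass to $\overline{\calN[G]}=\calN(\barG)$, observe it splits into two connected components on the two colour classes, apply Theorem~\ref{thm:unique}.1 to each, and count the ways of merging. The paper states the correspondence between tight arc orderings of $\calN[G]$ and pairs of tight interval orderings of the components in one sentence without justification; you spell out the non-trivial half of this, namely that in any tight arc ordering the two colour classes $X$ and $Y$ must be contiguous arcs. This extra work is genuinely needed, so your more detailed treatment is a plus rather than a deviation.

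Your contiguity argument is correct in substance but the phrasing via ``compatible run-pairs'' and ``blocks'' is vague. A cleaner formulation: each hyperedge $V\setminus N[v]$ of $\calC_1$ is an arc inside a single $Y$-run; two such hyperedges that intersect lie in the same run; since $\calC_1$ is connected (which you verified via $\barG$-connectedness), all its hyperedges lie in one run; but every $y\in Y$ lies in some hyperedge (again by $\barG$-connectedness), so $Y$ is a single run. This avoids having to track $X$-runs and $Y$-runs simultaneously. With that streamlining, your proof is complete and matches the paper's approach while filling the gap it leaves.
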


\begin{proof}
  {\bf 1.}
$\calN[G]$ has an arc ordering by Lemma~\ref{lem:geomistight}.
By Lemma~\ref{lem:anyistight}, any arc ordering of $\calN[G]$
is tight. The uniqueness follows from Theorem~\ref{thm:strict-conn}
by Theorem~\ref{thm:unique}.2.

{\bf 2.} 
The \emph{open neighborhood hypergraph} of a graph $G$ is defined by
$\calN(G)=\{N(v)\}_{v\in V(G)}$.
In place of $\calN[G]$, it is now practical to consider
the complement hypergraph $\overline{\calN[G]}=\Set{V(G)\setminus N[v]}_{v\in V(G)}$.
Note that $\overline{\calN[G]}=\calN(\barG)$.
This hypergraph is disconnected, and any (tight) arc ordering
of $\calN[G]$ induces a (tight) interval ordering of
each connected component of $\calN(\barG)$.
Conversely, arbitrary (tight) interval orderings
of the components of $\calN(\barG)$ can be merged
into a (tight) arc ordering of $\calN[G]$.
Since $\barG$ is connected, $\calN(\barG)$ has exactly two components.
Applying Theorem~\ref{thm:unique}.1 to each of them, we 
conclude that each of the two components has, up to reversing,
a single interval ordering. Since there are exactly two essentially
different ways to merge them, we see 
that $\calN[G]$ has, up to reversing, exactly two tight arc orderings.
\end{proof}

Let us stress that part 2 of Theorem~\ref{thm:Nrigid} concerns 
only tight orderings. To show that it
cannot be strengthened to the class of all arc orderings,
consider the \emph{half-graph}~$H_m$ that is the bipartite graph
with vertex classes $\{u_1,\ldots,u_m\}$ and $\{v_1,\ldots,v_m\}$
where $u_i$ is adjacent to $v_j$ if $i\le j$.
The complement $G_m=\overline{H_m}$ is a twin-free, connected PCA graph.
Note that, besides two pairs of mutually reversed tight interval orderings,
the hypergraph $\calN(H_3)$ has another interval ordering
and, hence, $\calN[G_3]$ has a non-tight arc ordering.
If we increase the parameter $m$, the number
of non-tight arc orderings of $\calN[G_m]$ grows exponentially.

\section{Overlap-connectedness}\label{s:ov-conn}

Let $G$ be a twin-free, connected PCA graph with non-bipartite complement.
In Theorem~\ref{thm:strict-conn} we established that $\calN[G]$
is strictly connected. This implies the uniqueness of a tight arc
ordering of this hypergraph by Theorem~\ref{thm:unique}.2.
Since by Lemma~\ref{lem:anyistight} any arc ordering is actually tight,
we obtain the uniqueness for the class of all arc orderings.

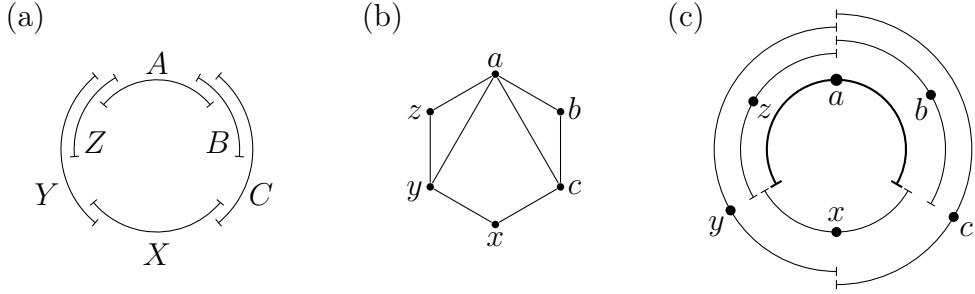
\begin{figure}
  \centering
  \begin{tikzpicture}[baseline=0cm]
    \node at (-1.75cm,1.75cm) {(a)};
    \begin{camodel}[cabase=.75cm,castep=.175cm]
      \carc{140}{40}{1}{$A$}
      \carc{-40}{-140}{2}{\strut$X$}
      \carc[labelpos=.67]{50}{-50}{3}{$\!C$}
      \carc[labelpos=.33]{230}{130}{3}{$Y$}
      \carc[swap,labelpos=.67]{60}{-5}{2}{$B$}
      \carc[swap,labelpos=.33]{185}{120}{2}{$Z$}
    \end{camodel}
  \end{tikzpicture}\hfil
  \begin{tikzpicture}[baseline=0cm]
    \node at (-1.5cm,1.75cm) {(b)};
    \path[every node/.style={circle,fill,inner sep=1pt},every label/.append style=rectangle]
        (90:1cm) node[label=above:$a$] (a)   {}
        (30:1cm) node[label=right:\strut$b$] (b)   {} edge (a)
        (-30:1cm) node[label=right:$c$] (c) {} edge (a) edge (b)
        (-90:1cm) node[label=below:$x$] (x) {} edge (c)
        (-150:1cm) node[label=left:\strut$y$] (y) {} edge (a) edge (x)
        (150:1cm) node[label=left:$z$] (z) {} edge (a) edge (y);
  \end{tikzpicture}\hfil
  \begin{tikzpicture}[baseline=0cm]
    \node at (-2cm,1.75cm) {(c)};
    \begin{camodel}[capoints=true,cabase=.75cm,castep=.175cm,
      spaced/.style={every label/.append style={inner sep=4pt}},
      close/.style={every label/.append style={inner sep=1pt}}]
      \carc[nodeangle=90,thick,swap,spaced]{210}{-30}{1}{$a$}
      \carc[nodeangle=-90,swap,spaced]{-30}{-150}{2}{$x$}
      \carc[nodeangle=150,swap,close]{210}{90}{3}{$z$}
      \carc[nodeangle=30,swap,close]{90}{-30}{4}{$b$}
      \carc[nodeangle=-150]{-90}{-270}{5}{$y$}
      \carc[nodeangle=-30]{90}{-90}{6}{$c$}
    \end{camodel}
  \end{tikzpicture}%
  \caption{An example: (a) A proper arc system; (b) The corresponding intersection graph~$G$.
Its complement $\barG$~is non-bipartite; (c) The closed neighborhood hypergraph~$\calN[G]$ is not strictly
overlap-connected: the hyperedge~$N[a]$ forms a single component.
Nevertheless, since after removal of~$N[a]$ the hypergraph~$\calN[G]$ stays twin-free
and becomes strictly overlap-connected, it has a unique, up to reversal, arc ordering.}\label{fig:example}
\end{figure}

The same conclusion could be derived more directly
by Theorem~\ref{thm:unique-overlap-2} when $\calN[G]$
would be strictly overlap-connected.
However, the last condition is not always true
because $\calN[G]$ can have hyperedges of size $n-1$ and
each such hyperedge forms a separate
strictly overlap-connected component; see an example in Fig.~\ref{fig:example}.
Nevertheless, if we remove the $(n-1)$-element hyperedge
from $\calN[G]$ in this example, the remaining hypergraph becomes
strictly overlap-connected and stays twin-free.
It turns out that this is a general phenomenon.
In fact, we derive this result from the uniqueness of an arc ordering, 
that we already established for $\calN[G]$ in the preceding section,
and a criterion of uniqueness given below.

\begin{lemma}\label{lem:ov-conn-gen}
Given a hypergraph~$\calH$ on $n\ge4$ vertices, let $\calH'$ 
be the hypergraph on the same vertex set 
obtained from~$\calH$ by removing all hyperedges of size $0$, $1$, $n-1$, and~$n$.
Then $\calH$~has a unique, up to reversing, arc ordering if and only if 
$\calH'$~is twin-free and either is strictly overlap-connected
or has a single isolated vertex and becomes strictly overlap-connected
after its removal.
\end{lemma}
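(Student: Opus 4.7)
The plan is first to reduce the biconditional to a statement about $\calH'$: every subset of $V(\calH)$ of size $0$, $1$, $n-1$, or $n$ is automatically an arc in any cyclic ordering of $V(\calH)$, so $\calH$ and $\calH'$ have identical sets of arc orderings. It therefore suffices to prove that $\calH'$ has a unique arc ordering (up to reversal) if and only if $\calH'$ is twin-free and either is strictly overlap-connected or has a single isolated vertex $v$ with $\calH'\setminus v$ strictly overlap-connected.

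For the backward direction, case~1 is exactly the hypothesis of Theorem~\ref{thm:unique-overlap-2}. In case~2, I would apply Theorem~\ref{thm:unique-overlap-2} to $\calH'\setminus v$ (which is twin-free because $v$ lies in no hyperedge of~$\calH'$) to fix the cyclic order on $V(\calH)\setminus\{v\}$ up to reversal, and then show that the location of $v$ in the full cyclic order on $V(\calH)$ is forced. Since $v$ is in no hyperedge of~$\calH'$, a hyperedge $H\in\calH'\setminus v$ remains an arc in the extended order iff $v$ is inserted into a ``gap'' between two cyclically consecutive vertices of $V(\calH)\setminus\{v\}$ whose two endpoints are not both in~$H$. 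So admissible positions for $v$ are the gaps not covered by any hyperedge of $\calH'\setminus v$, and the key claim is that at most one such gap exists: if two uncovered gaps $g_1,g_2$ existed, they would cut the cycle on $V(\calH)\setminus\{v\}$ into two nonempty complementary arcs $L$ and $R$, each hyperedge of $\calH'\setminus v$ (crossing neither gap) would lie entirely in $L$ or entirely in $R$, and hyperedges on opposite sides cannot strictly overlap. The strict overlap-connectedness of $\calH'\setminus v$ would then collapse all hyperedges onto one side, say~$L$, isolating every vertex of~$R$ in $\calH'\setminus v$, a contradiction.

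For the forward direction, assume $\calH'$ has a unique arc ordering. Twin-freeness follows from a swap argument via Lemma~\ref{lem:arcreprequi}: if $u\ne w$ were twins in $\calH'$, the transposition $(u\,w)$ applied to any arc representation $\rho$ would yield another arc representation, and for $n\ge4$ a transposition generically cannot coincide with a rotation or reflection, producing a second arc ordering. Since two isolated vertices in $\calH'$ would be twins, this simultaneously shows that $\calH'$ has at most one isolated vertex. Finally, if $\calH'$ (or $\calH'\setminus v$ in the isolated-vertex case) were not strictly overlap-connected, its strict-overlap graph would decompose as $\calK\sqcup\calL$ with no strict overlap across; invoking Quilliot's criterion~\cite{Quilliot84}, which states that a unique arc ordering is equivalent to every $X\subset V(\calH)$ with $1<|X|<n-1$ being strictly overlapped by some hyperedge of~$\calH'$, I would construct a witness $X$ (naturally $\bigcup\calK$, or its complement, or a Boolean combination of ``boundary'' hyperedges) with $1<|X|<n-1$ that is not strictly overlapped by any $H\in\calH'$, contradicting uniqueness.

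The main obstacle is the witness construction in the last step. The four possibilities for non-strict overlap between $H\in\calL$ and $K\in\calK$---disjointness, $H\subseteq K$, $K\subseteq H$, and $H\cup K=V(\calH)$---may mix across the various $K\in\calK$, so the naive choice $X=\bigcup\calK$ does not automatically yield non-strict overlap with every $H\in\calL$; moreover, the size bound $1<|X|<n-1$ can fail when $\bigcup\calK$ equals $V(\calH)$ or misses only a single vertex, forcing a more delicate choice of~$X$. The isolated-vertex subcase (case~2) adds that the witness analysis must be carried out on $\calH'\setminus v$ and lifted back to $V(\calH)$, with the position of~$v$ relative to $X$ requiring separate treatment.
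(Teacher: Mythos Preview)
Your backward direction and the twin-freeness/single-isolated-vertex arguments are correct and essentially parallel the paper (your gap argument for case~2 is in fact more explicit than the paper's terse appeal to Theorem~\ref{thm:unique-overlap-2}). The genuine gap is exactly where you flag it: deducing strict overlap-connectedness of~$\calH'$ from uniqueness via a Quilliot witness. The obstacles you list are real --- for instance, a strict-overlap component~$\calK$ may well have $\bigcup\calK=V(\calH)$, so neither it nor its complement is a legal witness, and Boolean combinations of boundary hyperedges do not obviously avoid strict overlap with all of~$\calL$ --- and you do not indicate how to overcome them. No clean completion along these lines is apparent.

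The paper avoids the witness construction with a different idea. Fix any vertex~$x$ and replace each hyperedge~$H\ni x$ by its complement, obtaining~$\calH_x$; this preserves arc orderings and the twin relation, and makes~$x$ isolated. Deleting~$x$ yields~$\calH_x^\circ$ with a unique \emph{interval} ordering, and now the PQ-tree description of all interval orderings in terms of overlap-connected components~\cite{BoothL76,KoeblerKLV11} applies: uniqueness forces the root overlap-connected component of~$\calH_x^\circ$ to be twin-free and every non-root component to be a single one-vertex hyperedge (arising precisely from the size-$1$ and size-$(n-1)$ hyperedges of~$\calH$). Hence the root equals $(\calH')_x^\circ$, which is therefore overlap-connected. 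Since the hyperedges of~$(\calH')_x$ all avoid~$x$, ordinary overlap among them is automatically strict in~$V(\calH)$; and because $A\between^*B\iff A\between^*(V\setminus B)$, this lifts to strict overlap among the original hyperedges of~$\calH'$. Thus $\calH'$ is strictly overlap-connected, modulo at most one isolated vertex. The complementation-at-$x$ reduction to the interval setting --- where the overlap-component tree gives the needed rigidity directly --- is the key step your outline is missing.
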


\begin{proof}
Assume that $\calH'$~is twin-free and either is strictly overlap-connected
or becomes so after removing a single isolated vertex.
By Theorem~\ref{thm:unique-overlap-2}, $\calH'$~has a unique arc ordering. 
Recall that uniqueness is always meant up to reversal.
This holds also for~$\calH$ because the two hypergraphs have
the same arc orderings.

Let us prove the lemma in the other direction.
Assume that $\calH$~has a unique arc ordering.
Since $n\ge4$, $\calH$~has no twins (for else transposition of two twins would give
us another arc ordering).

Fix an arbitrary vertex~$x$ of~$\calH$. For a hyperedge $H\in\calH$,
set $H_x=H$ if $x\notin H$ and $H_x=V(\calH)\setminus H$ otherwise.
Define the interval hypergraph $\calH_x=\Set{H_x}_{H\in\calH}$ 
where any empty hyperedge $H_x=\emptyset$ is removed.
If vertices~$u$ and~$v$ are distinguished by the incidence relation to a hyperedge~$H$, 
they are distinguished as well by the complement of $H$. This shows that $\calH_x$~is, like $\calH$, twin-free.
In particular, $x$~is the only isolated vertex of~$\calH_x$.
Let $\calH_x^\circ$ denote the hypergraph obtained from~$\calH_x$ by removing the vertex~$x$.

Any arc ordering for $\calH_x$~is also an arc ordering for~$\calH$.
Therefore $\calH_x$~has a unique arc ordering.
It readily follows that $\calH_x^\circ$~has a unique interval ordering.
The set of all interval orderings of a given hypergraph is described in~\cite{KoeblerKLV11}
in terms of the tree of its overlap-connected components, which is an analog of
the classical structure known as a \emph{PQ-tree}~\cite{BoothL76}. 
This description is based on an observation that any two overlap-connected components
of a hypergraph either are vertex-disjoint or one is contained in a twin-class of the other.
With respect to this containment relation, the overlap-connected components
form a directed forest. By Theorem~\ref{thm:unique-overlap-1}, every overlap-connected component
admits a unique interval ordering, and we have a freedom to reverse it within each of the
components. It follows that $\calH_x^\circ$~is connected 
and that the root overlap-connected component $\calR$ of~$\calH_x^\circ$ is twin-free.
Therefore, any other overlap-connected component of~$\calH_x^\circ$
must consist of a single one-vertex hyperedge (possibly obtained by complementing
a $(n-1)$-vertex hyperedge of~$\calH$).
It follows that $\calR$
is equal to the hypergraph~$(\calH')_x^\circ$ obtained from~$\calH'$ by complementing
all hyperedges that contain~$x$ and removing~$x$.
Thus, $(\calH')_x^\circ$ is twin-free and overlap-connected.
Since $(\calH')_x$~is obtainable from the connected hypergraph~$(\calH')_x^\circ$ by adding
an isolated vertex, $(\calH')_x$~is twin-free as well.
By the same argument as used above for~$\calH$ and~$\calH_x$, the hypergraph~$\calH'$ must be twin-free too.
In particular, $\calH'$~has at most one isolated vertex.

Note that if hyperedges~$H_x$ and~$K_x$ of the hypergraph~$(\calH')_x$
overlap, then they strictly overlap in the full circle $V(\calH)$ and, therefore,
the corresponding hyperedges~$H$ and~$K$ of the hypergraph~$\calH'$
strictly overlap. It now follows from the overlap-connectedness of~$(\calH')_x^\circ$ that
any two hyperedges in~$\calH'$ are connected by a $\between^*$-path.
This readily implies that $\calH'$~is strictly overlap-connected if it has no isolated
vertex or becomes such after removal of the (single) isolated vertex.
\end{proof}

Recall that $\calK$ is a \emph{spanning subhypergraph} of
a hypergraph~$\calH$ if
$V(\calK)=V(\calH)$ and any hyperedge of~$\calK$ is a hyperedge of~$\calH$.

\begin{theorem}\label{thm:ov-conn-str}
If $G$~is a twin-free, non-co-bipartite, connected PCA graph on~$n$ vertices, then
 the spanning subhypergraph of~$\calN[G]$ obtained by removing
all hyperedges of size $n-1$ is twin-free and strictly overlap-connected.
\end{theorem}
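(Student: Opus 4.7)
The plan is to apply Lemma~\ref{lem:ov-conn-gen} to $\calH=\calN[G]$. For this I will first verify that the spanning subhypergraph $\calH'$ featured in that lemma coincides with the one in the present theorem, and then rule out the exceptional ``single isolated vertex'' alternative.

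First, the hypotheses on $G$ exclude hyperedges of the wrong sizes in $\calN[G]$. Since $\barG$ is non-bipartite, a quick case check shows $n\ge 4$. Lemma~\ref{lem:nouniv} rules out universal vertices, so $|N[u]|\le n-1$ for every $u$, while connectedness of $G$ (with $n\ge 2$) forces $|N[u]|\ge 2$. Hence removing the hyperedges of sizes $0,1,n-1,n$ from $\calN[G]$ amounts to removing exactly those of size $n-1$, so the hypergraph $\calH'$ of Lemma~\ref{lem:ov-conn-gen} coincides with the spanning subhypergraph in the theorem's statement. By Theorem~\ref{thm:Nrigid}.1, $\calN[G]$ has, up to reversal, a unique arc ordering, and Lemma~\ref{lem:ov-conn-gen} then yields that $\calH'$ is twin-free and that it is either strictly overlap-connected, or has a single isolated vertex whose removal makes it strictly overlap-connected.

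The main obstacle will be to exclude the second alternative. Suppose $v$ were an isolated vertex of $\calH'$. Then every hyperedge of $\calN[G]$ containing $v$ has size $n-1$; these hyperedges are exactly $N[u]$ for $u\in N[v]$, so every such $u$ has a unique non-neighbor in $G$, which I denote $f(u)$. Let $w=f(v)$, the unique non-neighbor of $v$. For $u\in N(v)$ one has $f(u)\ne v$ (since $v\in N[u]$), so either $f(u)=w$ or $f(u)=u'$ for some $u'\in N(v)\setminus\{u\}$; in the latter case the uniqueness of $\barG$-neighbors forces $f(u')=u$ as well. Hence the edges of $\barG$ decompose as a star centered at $w$ (with leaves $v$ together with those $u\in N(v)$ satisfying $f(u)=w$) plus a perfect matching on the remaining vertices of $N(v)$. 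In particular $\barG$ is bipartite, contradicting the hypothesis. Therefore $\calH'$ is strictly overlap-connected, and by Lemma~\ref{lem:ov-conn-gen} also twin-free, as required.
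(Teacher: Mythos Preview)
Your proof is correct and follows essentially the same route as the paper: verify that $\calH'$ coincides with the hypergraph in Lemma~\ref{lem:ov-conn-gen}, invoke Theorem~\ref{thm:Nrigid}.1 to get uniqueness of the arc ordering, and then rule out the isolated-vertex alternative by showing it would force $\barG$ to be bipartite. The only cosmetic difference is in the last step: the paper additionally observes (using twin-freeness of $G$) that $f(u)\ne w$ for every $u\in N(v)$, so the star degenerates to the single edge $\{v,w\}$ and $\barG$ is actually a perfect matching; your weaker conclusion that $\barG$ is a star plus a matching is already enough for bipartiteness, so nothing is lost.
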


\begin{proof}
Assume that a graph~$G$ satisfies all the assumptions.
Note that then $G$~has $n\ge4$ vertices.
Denote $\calH=\calN[G]$. 
Since $G$~is connected, $\calH$~has no hyperedge of size~$1$. 
It also has no hyperedge of size $n$
because $G$~has no universal vertex by Lemma~\ref{lem:nouniv}.
Remove all hyperedges of size $n-1$ and denote the result by~$\calH'$.
We have to show that $\calH'$~is twin-free and strictly overlap-connected.

By Theorem~\ref{thm:Nrigid}.1,
$\calH$~has a unique, up to reversing, arc ordering.
By Lemma~\ref{lem:ov-conn-gen},
$\calH'$~is twin-free and strictly overlap-connected unless it has an isolated vertex.
It remains to exclude the last possibility.

Suppose that all hyperedges of~$\calH$ containing a vertex~$v$ are of size $n-1$.
In particular, $N[v]=V(G)\setminus\{v'\}$, where $v'$~is a unique vertex non-adjacent
to~$v$. Note that $v$~is contained in the hyperedge~$N[u]$ for any $u\ne v'$.
For each of these hyperedges we, therefore, have $|N[u]|=n-1$.
Moreover, $v'\in N[u]$ for else $u$ and~$v$ would be twins.
It follows that $|N[v']|=n-1$ as well. We conclude that $\barG$~is a perfect matching,
contradicting the assumptions that $\barG$~is not bipartite.  
\end{proof}

Note that Theorem~\ref{thm:ov-conn-str} gives us a complete description of the decomposition of~$\calN[G]$
into strictly overlap-connected components because each hyperedge  of size $n-1$ forms such a 
component alone. Figure~\ref{fig:manylargehyperedges} shows that the number of
such single-hyperedge components can be linear.

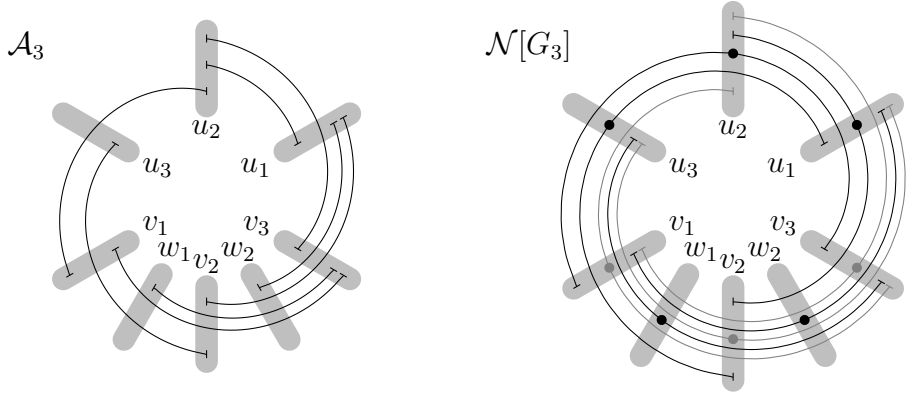
\begin{figure}
  \centering
  \numdef\k{3}
  \numdef\n{3*\k-1}
  \numdef\p{2*\k}
  \numdef\km{\k-1}
  \numdef\rot{-180/\p}
  \dimdef\base{1.2cm}
  \begin{tikzpicture}
    \begin{camodel}[cabase=\base,caanglestep=360/\p,castep=.1cm]
      \node[anchor=east] at (-2,2) {$\calA_3$};
      \begin{scope}[rotate=\rot]
        \begin{pgfonlayer}{background}
          \draw[black!25,line width=3mm,line cap=round]
          (360/\p:1cm+\base) -- (360/\p:\base)
          node[black,anchor=\rot+360/\p,inner sep=1pt] {$u_1$};
        \end{pgfonlayer}
        \CArc{start=1,end=(\k-1),startlevel=2,endlevel=5.5}
        \foreach \i in {2,...,\k} {%
          \begin{pgfonlayer}{background}
            \draw[black!25,line width=3mm,line cap=round]
            (\i*360/\p:1cm+\base) -- (\i*360/\p:\base)
            node[black,anchor=\rot+\i*360/\p,inner sep=1pt] {$u_\i$};
          \end{pgfonlayer}
          \CArc{start=\i,end=(\k+\i-1),startlevel=2,endlevel=(3*\k)}
        }
        \foreach \i in {1,...,\k} {%
          \begin{pgfonlayer}{background}
            \draw[black!25,line width=3mm,line cap=round]
            ({(\k+\i)*360/\p}:1cm+\base) -- ({(\k+\i)*360/\p}:\base)
            node[black,anchor=\rot+(\k+\i)*360/\p,inner sep=1pt] {$v_\i$};
          \end{pgfonlayer}
          \CArc{start=(\k+\i),end=(\p+\i-1),startlevel=2,endlevel=(3*\k)}
        }
        \foreach \i in {1,...,\km} {%
          \begin{pgfonlayer}{background}
            \draw[black!25,line width=3mm,line cap=round]
            ({(\k+\i+1/2)*360/\p}:1cm+\base) -- ({(\k+\i+1/2)*360/\p}:\base)
            node[black,anchor=\rot+(\k+\i+1/2)*360/\p,inner sep=1pt] {$w_\i$};
          \end{pgfonlayer}
          \CArc{start=(\k+\i+1/2),end=(\p+\i-1),startlevel=2,endlevel=(3*\k-1.5)}
        }
      \end{scope}
    \end{camodel}
    \begin{camodel}[cabase=\base,xshift=7cm,caanglestep=360/\p,castep=.1cm,capoints=true]
      \node[anchor=east] at (-2,2) {$\calN[G_3]$};
      \begin{scope}[rotate=\rot]
        \begin{pgfonlayer}{background}
          \draw[black!25,line width=3mm,line cap=round]
          (360/\p:1.25cm+\base) -- (360/\p:\base)
          node[black,anchor=\rot+360/\p,inner sep=1pt] {$u_1$};
        \end{pgfonlayer}
        \CArc{start=(1+1-\k),end=(\k-1),startlevel=2,endlevel=(4*\k-2.5),labelangle=1}
        \foreach \i in {2,...,\k} {%
          \begin{pgfonlayer}{background}
            \draw[black!25,line width=3mm,line cap=round]
            (\i*360/\p:1.25cm+\base) -- (\i*360/\p:\base)
            node[black,anchor=\rot+\i*360/\p,inner sep=1pt] {$u_\i$};
          \end{pgfonlayer}
          \CArc{start=(\i+1-\k),end=(\k+\i-1),startlevel=2,endlevel=(4*\k)} }
        \foreach \i in {1,...,\k} {%
          \begin{pgfonlayer}{background}
            \draw[black!25,line width=3mm,line cap=round]
            ({(\k+\i)*360/\p}:1.25cm+\base) -- ({(\k+\i)*360/\p}:\base)
            node[black,anchor=\rot+(\k+\i)*360/\p,inner sep=1pt] {$v_\i$};
          \end{pgfonlayer}
          \CArc{gray,start=(\i+1),end=(\p+\i-1),startlevel=2,endlevel=(4*\k)} }
        \foreach \i in {1,...,\km} {%
          \begin{pgfonlayer}{background}
            \draw[black!25,line width=3mm,line cap=round]
            ({(\k+\i+1/2)*360/\p}:1.25cm+\base) -- ({(\k+\i+1/2)*360/\p}:\base)
            node[black,anchor=\rot+(\k+\i+1/2)*360/\p,inner sep=1pt] {$w_\i$};
          \end{pgfonlayer}
          \CArc{start=(\i+2),end=(\p+\i-1),startlevel=3.25,endlevel=(4*\k-1.25)}
        }
      \end{scope}
    \end{camodel}
  \end{tikzpicture}
  \caption{A non-co-bipartite, twin-free, connected PCA graph on~$n$ vertices
    can have more than $n/3$ vertices of degree $n-2$ (a vertex~$v$ has degree $n-2$ iff
    $\left|N[v]\right|=n-1$). For each $k\ge2$, define a graph~$G_k$ on $3k-1$
    vertices by its arc model~$\calA_k$ such that $k$ of the vertices
will have degree $n-2$. On the circle
    $\circl=\{u_1,u_2,\dotsc,u_k,v_1,w_1,v_2,w_2,\dotsc,v_{k-1},w_{k-1},v_k\}$,
    whose points go in this circular order, consider arcs
    $U_1=[u_1,u_{k-1}]$, $U_i=[u_i,v_{i-1}]$ for $2\le i\le k$, 
    $V_1=[v_1,v_k]$, $V_i=[v_i,u_{i-1}]=\circl\setminus U_i$ for $2\le i\le k$, and
    $W_i=V_i\setminus\{v_i\}$ for $i\le k-1$. As $\calA_k$~is tight, it can easily
be made proper and, hence, $G_k$~is PCA. Since the arcs have pairwise
different left endpoints, we can identify $V(G_k)=\circl$. The graph can be described
by listing the pairs of non-adjacent vertices, namely $v_iu_i$, $w_iu_i$, $w_iu_{i+1}$,
and~$u_1u_k$. There are no twins, and $G_k$~is not co-bipartite
because its complement contains an odd cycle, namely $u_1w_1u_2w_2\ldots u_k$.
In accordance with Theorem~\protect\ref{thm:ov-conn-str}, 
when we remove from~$\calN[G_k]$ the hyperedges $N[v_i]=\circl\setminus\{u_i\}$,
the hypergraph stays twin-free and becomes strictly overlap-connected.
This can be seen by looking at the following path in the complementing hypergraph:
$\overline{N[u_1]}=\{u_k,v_1,w_1\}$, $\overline{N[u_2]}=\{w_1,v_2,w_2\}$, \ldots,
$\overline{N[u_{k-1}]}=\{w_{k-2},v_{k-1},w_{k-1}\}$, $\overline{N[u_k]}=\{w_{k-1},v_k,u_1\}$,
$\overline{N[w_1]}=\{u_1,u_2\}$, \ldots, $\overline{N[w_{k-1}]}=\{u_{k-1},u_k\}$.
The figure shows~$\calA_3$
    and an arc model for~$\calN[G_3]$ which has the arcs of size $n-1$ grayed
    out.}\label{fig:manylargehyperedges}
\end{figure}

\section{Intersection representations of graphs}\label{s:repr}

Part 1 of Theorem~\ref{thm:Nrigid} implies that a twin-free, connected PCA graph $G$
with non-bipartite complement has a unique geometric order.
If $\barG$ is bipartite and connected, part~2 (combined with Lemma~\ref{lem:geomistight})
leaves two different possibilities. It turns out that, nevertheless, 
a geometric order is unique also in this case. We state this fact
in Theorem~\ref{thm:Aunique2} below, proving an auxiliary lemma
beforehand.

\begin{lemma}\label{lem:one-endpoint}
If a graph $G$ contains at most one universal vertex, then
any proper arc representation $\alpha$ of $G$ has the following property:
If $v$ and $v'$ are adjacent vertices of $G$,
then the arcs~$\alpha(v)$ and~$\alpha(v')$
strictly overlap, that is, contain exactly one endpoint of each other.  
\end{lemma}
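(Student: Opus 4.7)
The plan is to argue by exhaustion of the configurations that two adjacent proper arcs can take on the circle. For adjacent vertices $v$ and $v'$, adjacency gives $\alpha(v)\cap\alpha(v')\ne\emptyset$, and the properness of $\alpha$ together with $v\ne v'$ gives neither $\alpha(v)\subseteq\alpha(v')$ nor $\alpha(v')\subseteq\alpha(v)$; in particular neither arc equals~$\circl_m$. Given these constraints, the pair $\{\alpha(v),\alpha(v')\}$ can sit on $\circl_m$ in only two ways: (a) strict overlap, meaning their intersection is a single subarc and each arc contains exactly one endpoint of the other; or (b) $\alpha(v)\cup\alpha(v')=\circl_m$, in which case (since neither arc is the whole circle) the intersection splits into two disjoint subarcs. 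Case (a) is the conclusion we want, so the task reduces to ruling out case~(b).

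Assume (b). For an arbitrary vertex $u\in V(G)$, one has $\alpha(u)\subseteq\circl_m=\alpha(v)\cup\alpha(v')$, so if $\alpha(u)\cap\alpha(v')=\emptyset$ then $\alpha(u)\subseteq\alpha(v)$, which by properness and bijectivity of $\alpha$ forces $u=v$. Hence every $u\ne v$ satisfies $\alpha(u)\cap\alpha(v')\ne\emptyset$, i.e.\ $u\in N[v']$; combined with $v\in N[v']$ (which holds by the adjacency of $v$ and $v'$), this gives $N[v']=V(G)$, so $v'$ is universal. Symmetrically $v$ is universal too. As $v\ne v'$, this contradicts the hypothesis that $G$ contains at most one universal vertex, so (b) is impossible and (a) must hold.

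The main obstacle is simply to verify the dichotomy (a)/(b) for the mutual position of two adjacent proper arcs on the circle; once this combinatorial picture is in place, the universal-vertex hypothesis is invoked in a one-line argument. No properties of~$G$ beyond connectedness-type assumptions are needed, and the reasoning is local to the pair $v,v'$ together with a single application of properness to each other vertex.
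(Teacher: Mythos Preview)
Your proof is correct and follows essentially the same approach as the paper's: both identify the trichotomy (inclusion in either direction, or union covering the circle), rule out inclusions by properness, and rule out the covering case by showing it would make both $v$ and $v'$ universal. You spell out one step the paper leaves implicit---namely, that in case~(b) any $\alpha(u)$ disjoint from $\alpha(v')$ would be forced into $\alpha(v)$ and hence equal to it by properness---but this is just added detail, not a different argument.
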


\begin{proof}
Otherwise we would have either $\alpha(v)\subset\alpha(v')$,
or $\alpha(v)\supset\alpha(v')$, or the union $\alpha(v)\cup\alpha(v')$
would cover the whole circle.
The first two conditions would contradict the assumption that $\alpha$~is proper,
while the last condition would imply that every $\alpha(u)$ intersects both $\alpha(v)$
and~$\alpha(v')$. Therefore, both $v$ and $v'$ were universal,
a contradiction.
\end{proof}

\begin{theorem}\label{thm:Aunique2}
Let $G$ be a twin-free, connected PCA graph.
If its complement $\barG$ is non-bipartite or connected,
then all geometric orders associated with proper arc representations of~$G$
are equal up to reversing.
\end{theorem}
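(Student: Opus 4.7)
I will split the argument by the structure of~$\barG$ into two cases.

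\emph{Case A: $\barG$ is non-bipartite.} This case follows immediately. By Lemma~\ref{lem:geomistight}, every geometric order associated with a proper arc representation of~$G$ is an arc ordering of~$\calN[G]$, and Theorem~\ref{thm:Nrigid}.1 says $\calN[G]$~has, up to reversing, a unique arc ordering; hence all geometric orders coincide up to reversing.

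\emph{Case B: $\barG$ is bipartite and connected.} Let $V_1, V_2$ be the parts of the bipartition of~$\barG$; these are cliques of~$G$. Since $\barG$~is connected on at least two vertices, no vertex of~$G$ is universal, so Lemma~\ref{lem:one-endpoint} will apply to every proper arc representation of~$G$. Theorem~\ref{thm:Nrigid}.2 gives exactly two tight arc orderings $\sigma, \sigma'$ of~$\calN[G]$ up to reversing, arising from the two essentially different ways to merge the unique interval orderings of the two connected components of~$\calN(\barG)$. By Lemma~\ref{lem:geomistight}, every geometric order is one of~$\sigma, \sigma'$, so the remaining task is to rule out that both are realized.

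I plan to argue by contradiction: assume proper arc representations $\alpha, \beta$ satisfy $\prec_\alpha = \sigma$ and $\prec_\beta = \sigma'$. From the construction behind Theorem~\ref{thm:Nrigid}.2, the orderings $\sigma, \sigma'$ agree on the internal linear orders of~$V_1$ and of~$V_2$ individually but differ in the relative orientation of these two blocks around the circle. After globally reversing~$\beta$ if necessary, I may assume $\prec_\alpha$ and $\prec_\beta$ agree on the linear order of~$V_1$ yet induce opposite linear orders on~$V_2$. Since $G$~is connected and $V_1, V_2$~are nonempty, there is an edge $uw \in E(G)$ with $u \in V_1$ and $w \in V_2$. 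Lemma~\ref{lem:one-endpoint} then forces $\alpha(u)$ and~$\alpha(w)$ to strictly overlap (and similarly for~$\beta$), constraining the cyclic order of the four endpoints $u^-, u^+, w^-, w^+$ and thereby pinning down the position of~$w$ relative to the arc~$\alpha(u)$ in the geometric order.

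The central step will be a propagation argument: with the linear order of~$V_1$ fixed, the strict-overlap constraint at each cross-edge $u'w'$ (with $u' \in V_1, w' \in V_2$) should force where~$w'$ sits relative to the $V_1$-block. Using connectedness of~$\barG$, one can chain cross-edges via alternating walks through $V_1$- and $V_2$-vertices, thereby propagating the local constraints and concluding that the linear order of~$V_2$ in~$\prec_\alpha$ must equal the one in~$\prec_\beta$, contradicting the assumption. The main obstacle I anticipate is formalizing this propagation: translating the local geometric constraint from a single strict overlap into combinatorial information about the circular order will require a case analysis on which ``side'' of~$\alpha(u)$ the arc~$\alpha(w)$ lies, together with a check that the resulting direction is determined by the combinatorial structure of~$\calN[G]$ alone and is globally consistent across~$\barG$.
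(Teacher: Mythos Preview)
Your Case~A is correct and matches the paper exactly.

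Your Case~B sets the stage correctly (the two tight orderings from Theorem~\ref{thm:Nrigid}.2, Lemma~\ref{lem:geomistight} restricting geometric orders to these, and the applicability of Lemma~\ref{lem:one-endpoint} since $\barG$ connected forbids universal vertices). However, the ``propagation'' plan has a genuine gap that you yourself flag. A single cross-edge $uw\in E(G)$ with the strict-overlap condition does \emph{not} determine on which side of~$u$ the vertex~$w$ lies in the geometric order: the overlap could be clockwise or counter-clockwise, and nothing in the pair $(u,w)$ alone breaks this symmetry. What actually pins positions down is \emph{disjointness} of arcs, i.e., non-edges of~$G$ (equivalently, edges of~$\barG$). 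Your sketch conflates the two when it speaks of chaining ``cross-edges'' via walks in~$\barG$; walks in~$\barG$ traverse non-edges of~$G$, not the cross-edges you started with. Without a concrete mechanism combining overlaps and disjointnesses, the propagation cannot get off the ground.

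The paper sidesteps propagation entirely with a one-shot argument. Choose $w,w'\in W$; since they are not twins and $W$~is a clique in~$G$, some $u\in U$ is adjacent to~$w$ but not to~$w'$. Since $w$~is not universal, some $u'\in U$ is non-adjacent to~$w$. Now $\alpha(w')$ and~$\alpha(u)$ are disjoint but both meet~$\alpha(w)$, so they contain different endpoints of~$\alpha(w)$; likewise $\alpha(u')$ and~$\alpha(w)$ are disjoint but both meet~$\alpha(u)$, so they contain different endpoints of~$\alpha(u)$. This forces the circular sequence $w',w,u,u'$ (up to reversal) in \emph{every} proper arc representation. Since the quadruple is defined purely from~$G$, it selects one of the two merges, and you are done. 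The missing idea in your plan is precisely this: use two carefully chosen non-adjacencies to anchor the relative orientation, rather than trying to extract it from overlaps alone.
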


\begin{proof}
As we just discussed, it is enough to consider the case that $\barG$
is bipartite and connected.
  Let $V(G)=U\cup W$ be the bipartition of~$\barG$ into two independent sets.
Note that $U$ and $W$ span two connected components of the hypergraph $\calN(\barG)$.
Consider an arbitrary proper arc representation~$\alpha$ of~$G$ and the corresponding
geometric order $\prec_\alpha$ on~$V(G)$. 
Like in the proof of Theorem~\ref{thm:Nrigid}.2,
notice that $\prec_\alpha$~is a tight
interval order for each connected component of~$\calN(\barG)$ and, therefore,
the restrictions of~$\prec_\alpha$ to~$U$ and~$W$,
each considered up to reversing, do not depend on~$\alpha$ by Theorem~\ref{thm:unique}.1.
This still leaves two distinct possibilities of merging them into~$\prec_\alpha$.
We now show that one of them is actually ruled out, and this will
prove that $\prec_\alpha$ does not depend on~$\alpha$, 
if this order is considered up to reversing.

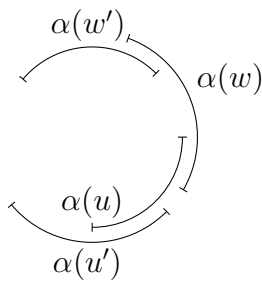
\begin{figure}
  \centering
\begin{tikzpicture}
 \begin{camodel}
  \carc{140}{45}{1}{$\alpha(w')$}
  \carc[endlabel=$\alpha(u)$,endlabelpos=inside]{0}{-90}{1}{}
  \carc{70}{-30}{2}{$\alpha(w)$}
  \carc{-45}{-140}{2}{$\alpha(u')$}
 \end{camodel}
\end{tikzpicture}
  \caption{Proof of Theorem~\protect\ref{thm:Aunique2}.}\label{fig:wwuu}
\end{figure}

Note that, since $\barG$ is connected, $G$ cannot have universal vertices
and we can use Lemma~\ref{lem:one-endpoint}.

The assumptions of the theorem imply that both $U$ and~$W$ contain at least two
vertices. Let $w$ and~$w'$ be two vertices in~$W$. Since $w$ and~$w'$ are not twins,
they are distinguished by adjacency to some vertex $u\in U$. W.l.o.g.\
suppose that $w$ and~$u$ are adjacent, while $w'$ and~$u$ are not.
Since $w$~is not universal in~$G$, there is a vertex $u'$
in~$U$ non-adjacent to~$w$. Note that $\alpha(w')$ and~$\alpha(u)$
contain different endpoints of~$\alpha(w)$, and $\alpha(u')$ and~$\alpha(w)$
contain different endpoints of~$\alpha(u)$; see Fig.~\ref{fig:wwuu}. 
It follows that the arcs~$\alpha(w')$, $\alpha(w)$, $\alpha(u)$, and~$\alpha(u')$ occur
in the arc model exactly in this circular order, irrespective of whether $\alpha(w')$
and~$\alpha(u')$ intersect or not. 
Since the quadruple $w',w,u,u'$ was chosen in terms of the graph~$G$ alone
and $\alpha$ was supposed to be arbitrary, this conclusion holds true for
any proper arc representation of~$G$. Therefore, there is a unique way
of merging the restrictions of~$\prec_\alpha$ to~$U$ and~$W$ so as to obtain
an ordering of~$V(G)$ consistent with some geometric order.
This proves the desired uniqueness result.
\end{proof}

If a graph has one interval or arc representation, 
we can obtain many other representations, for example, by cloning some points
of the circle. To disallow such trivial modifications, let us
impose some nonrestrictive conditions on interval and arc models.
Let $x$ be a point in an arc system~$\calA$. We call $x$ \emph{inner}
if no arc of~$\calA$ begins or ends at~$x$. 
Suppose now that $\calA$~is a proper intersection model of a graph~$G$.
We can modify $\calA$ so that it remains a proper model of~$G$
while the following two conditions are true:
\begin{itemize}
\item 
no two arcs share an endpoint point; this can always be achieved by cloning
a shared endpoint point (if $x$ the right endpoint of~$A_1$ and the left endpoint of~$A_2$,
replace $x$ with the pair $x_1,x_2$ so that $A_1$ ends at $x_1$, $A_2$ starts at $x_2$,
and $x_1$~is the right neighbor of~$x_2$);
\item 
there is no inner point (just remove all of them).
\end{itemize}
We call such arc models and representations \emph{sharp}. 
Sharp interval representations are defined similarly.
Note that, if $G$~has $n$ vertices, any sharp
model of~$G$~has $m=2n$ points.

We now show that a sharp proper representation is
reconstructible from the associated geometric order.
Given two sharp arc representations $\alpha$ and $\alpha'$
of a graph $G$, we say that they are equal \emph{up to rotation}
of the circle $\circl_{m}$ if there is a rotation $\sigma$
of $\circl_{m}$ such that $\alpha'=\sigma\circ\alpha$.
If $\sigma$ is allowed to be also the reflection of $\circl_{m}$,
then we say that $\alpha$ and $\alpha'$ are equal \emph{up to rotation and reflection}.

\begin{lemma}\label{lem:ReprOrder}
\begin{bfenumerate}
\item
If sharp proper interval representations of a graph~$G$ determine the same
geometric order, then they are equal; that is, if $\prec_\alpha=\prec_{\alpha'}$,
then $\alpha=\alpha'$.
\item
Suppose that $G$~has at most one universal vertex.
If sharp proper arc representations of~$G$ determine the same geometric order, 
then they are equal up to rotation.
\end{bfenumerate}
\end{lemma}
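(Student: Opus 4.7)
The plan is to show that a sharp proper representation is rigidly determined by its associated geometric order. The key observation is that in a sharp model there are exactly $2n$ points (no sharing, no inner points), and I will argue that the linear (resp.\ cyclic) interleaving of left and right endpoints is pinned down by the order $\prec_\alpha$ together with the adjacency relation of~$G$.

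For Part~1, let $\alpha$ be a sharp proper interval representation with geometric order $v_1,\ldots,v_n$, and write $\alpha(v_i)=[l_i,r_i]$. Properness forces $l_1<\cdots<l_n$ and $r_1<\cdots<r_n$. For each $i$, let $b_i$ be the largest $j$ with $v_j\in N[v_i]$; then $l_{b_i}<r_i$ by adjacency of $v_i$ and $v_{b_i}$, and $r_i<l_{b_i+1}$ (when $b_i<n$) by non-adjacency of $v_i$ and $v_{b_i+1}$, so $r_i$ falls into a uniquely determined slot between consecutive left endpoints. Within each slot, the $r_i$'s landing there are automatically ordered by increasing~$i$, so the linear order of all $2n$ endpoints is fixed. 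Sharpness forces these $2n$ endpoints to occupy exactly $\{1,\ldots,2n\}$, so a second representation $\alpha'$ with the same geometric order must satisfy $\alpha=\alpha'$.

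For Part~2, I carry out the analogous cyclic argument. Here Lemma~\ref{lem:one-endpoint} is essential: the hypothesis of at most one universal vertex guarantees that adjacent arcs strictly overlap, so each arc $\alpha(v_j)$ with $v_j\in N[v_i]\setminus\{v_i\}$ contains exactly one of $l_i$, $r_i$. As in Part~1, this determines, for every non-universal $v_i$, the unique cyclic slot between consecutive left endpoints that contains $r_i$; increasing order of the right endpoints then totally orders the points within each slot. Fixing the rotation by placing $l_1$ at a chosen position, the reconstruction yields the positions of all endpoints of non-universal vertices. The remaining endpoints of the at most one universal vertex $v^*$ are then forced: $l_{v^*}$ sits in the slot dictated by the geometric order, and $r_{v^*}$ is placed in the unique slot making exactly the correct already-placed arcs contain it (equivalently, by strict overlap of $\alpha(v^*)$ with every other arc). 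Hence any two sharp proper arc representations with the same geometric order agree up to a rotation of~$\circl_{2n}$.

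The main obstacle is precisely the universal-vertex case in Part~2: when $v^*$ is universal, $N[v^*]$ cyclically covers everything and the ``furthest clockwise neighbour'' used in the interval argument has no direct analog. The hypothesis of at most one universal vertex lets me treat $v^*$ as a last step after all other arcs have been placed; without it, two universal vertices could have arcs covering the whole circle together, leaving the orientation of their overlap genuinely ambiguous and breaking the reconstruction.
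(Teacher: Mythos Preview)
Your proof is correct and follows the same strategy as the paper: both reconstruct the arrangement of the $2n$ endpoints from the geometric order together with the adjacency data of~$G$. The paper does this via explicit position formulas---namely $b_i=a_i+1+|N(v_i)|$ together with a closed-form expression for $a_i$ in terms of $N^+[v_1]$ and $N^-[v_i]$---while you phrase the same counting in terms of which slot between consecutive left endpoints each $r_i$ falls into; these are two presentations of the same argument. One small remark: your justification for the uniqueness of $r_{v^*}$'s slot (``making exactly the correct already-placed arcs contain it'') is valid but a bit indirect; a crisper route is to invoke the length identity $b_{v^*}=a_{v^*}+n$ (since $|N(v^*)|=n-1$), which pins down $r_{v^*}$ once the cyclic order of the remaining $2n-1$ points is known.
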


\begin{proof}
{\bf 1.}
Let $\alpha$ be a sharp proper interval representation of a graph~$G$ and
$v_1,\ldots,v_n$ be the ordering of~$V(G)$ according to the geometric order $\prec_\alpha$
associated with $\alpha$. It suffices to notice that
$\alpha$~is completely determined by this order. Indeed, let $\alpha(v_i)=[a_i,b_i]$.
Then we must have
\begin{eqnarray}
b_i&=&a_i+1+|N(v_i)|\text{\ \ and}\label{eq:bi}\\      
a_i&=&i+\left|\setdef{j<i}{v_j\notin N[v_i]}\right|.\label{eq:ai}
\end{eqnarray}

{\bf 2.}
Given the geometric order of the vertex set 
associated with a sharp proper arc representation~$\alpha$ of~$G$,
we have to show that it determines $\alpha$ up to rotation.
Fix a sequence $v_1,\ldots,v_n$ according to this order.
From this sequence we can determine the clockwise neighborhood~$N^+[v_i]$
and the counter-clockwise neighborhood~$N^-[v_i]$ of any vertex~$v_i$.
Suppose that $\alpha(v_i)=[a_i,b_i]$ and $a_1=1$. The latter condition
can be ensured by shifting (renaming) the points in the circle~$\circl_{2n}$,
which results just in a rotation of~$\alpha$. 

By Lemma~\ref{lem:one-endpoint}, if
$v_i$~is adjacent to~$v_j$ then $\alpha(v_i)$ contains exactly
one of the endpoints~$a_j$ and~$b_j$. 
We can now see that the start points~$a_i$ 
and end points~$b_i$ are uniquely determined. Equality~\refeq{bi} holds true
exactly as in part~1; whenever the right hand side exceeds $2n$, it has
to be decreased by this number. Equality~\refeq{ai} has to be adjusted. 
If $v_i\notin N^+[v_1]$, then
\[
[1,a_i]=[1,b_1]\cup\setdef{a_j}{1<j\le i,a_j\notin[1,b_1]}\cup
\setdef{b_j}{1<j<i,b_j\notin[a_i,b_i]}.
\]
It follows that
\[
a_i=|[1,a_i]|=2+|N(v_1)|+\left|\setdef{j}{1<j\le i,v_j\notin N^+[v_1]}\right|+
\left|\setdef{j}{1<j<i,v_j\notin N^-[v_i]}\right|.
\]
If $v_i\in N^+[v_1]$, then
\[
[1,a_i]=\setdef{a_j}{1\le j\le i}\cup\setdef{b_j}{v_j\in N^-[v_1]\setminus N^-[v_i]},
\]
hence
\[
a_i=|[1,a_i]|=i+|N^-[v_1]\setminus N^-[v_i]|,
\]
completing the proof.
\end{proof}

Combining Lemma~\ref{lem:ReprOrder} with Theorem~\ref{thm:Aunique2}
(or with Roberts' uniqueness theorem for proper interval graphs),
we immediately obtain the following result.

\begin{theorem}\label{thm:prop-repr-unique}\mbox{}
  \begin{bfenumerate}
  \item 
A twin-free, connected proper interval graph has, up to reflection,
a unique sharp proper interval representation.
  \item 
A twin-free, connected PCA graph with non-bipartite or connected complement has, 
up to rotation and reflection, a unique sharp proper arc representation.
  \end{bfenumerate}
\end{theorem}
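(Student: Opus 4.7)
The plan is to combine Lemma~\ref{lem:ReprOrder}, which says that a sharp proper representation is recoverable from its associated geometric order (uniquely for intervals, up to rotation for arcs), with the rigidity results for geometric orders already proved. As the remark after the statement suggests, this is just an assembly argument.

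For part~1, let $\alpha$ and $\alpha'$ be two sharp proper interval representations of a twin-free, connected proper interval graph~$G$. By Roberts' uniqueness theorem (equivalently, by Theorem~\ref{thm:unique-overlap-1} applied to the overlap-connected hypergraph $\calN[G]\setminus\{V(G)\}$ as recalled in Section~\ref{s:intro}), the associated geometric orders $\prec_\alpha$ and $\prec_{\alpha'}$ coincide up to reversal. If $\prec_\alpha=\prec_{\alpha'}$, Lemma~\ref{lem:ReprOrder}.1 gives $\alpha=\alpha'$. If $\prec_{\alpha'}$ is the reverse of $\prec_\alpha$, first compose $\alpha'$ with the reflection of the segment $\{1,\ldots,m\}$; the resulting sharp proper interval representation has geometric order $\prec_\alpha$, so another application of Lemma~\ref{lem:ReprOrder}.1 identifies it with $\alpha$. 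Hence $\alpha$ and $\alpha'$ are equal up to reflection.

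For part~2, I would proceed in exactly the same manner, using Theorem~\ref{thm:Aunique2} (which handles both the non-bipartite and the connected-complement case) in place of Roberts' theorem and Lemma~\ref{lem:ReprOrder}.2 in place of Lemma~\ref{lem:ReprOrder}.1. The one extra step is to check that the hypothesis of Lemma~\ref{lem:ReprOrder}.2, namely that~$G$ has at most one universal vertex, is satisfied. If $\barG$ is non-bipartite, Lemma~\ref{lem:nouniv} gives that~$G$ has no universal vertex at all. If $\barG$ is connected and has at least two vertices, a universal vertex of~$G$ would be isolated in~$\barG$, so again there is no universal vertex; the degenerate small cases are trivial. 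Given this, if $\prec_\alpha=\prec_{\alpha'}$ then Lemma~\ref{lem:ReprOrder}.2 gives $\alpha'=\sigma\circ\alpha$ for some rotation~$\sigma$ of~$\circl_m$; if they are opposite, composing $\alpha'$ first with the reflection of~$\circl_m$ reduces to this case. Thus $\alpha$ and $\alpha'$ coincide up to rotation and reflection.

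There is no real obstacle here — the only point that needs a line of justification is that reversal of the geometric order is realized by reflection of the ambient linear segment (part~1) or circle (part~2). This is immediate from the fact that $\prec_\alpha$ is defined via the cyclic/linear order of the left endpoints of the intervals or arcs $\alpha(v)$, and a reflection of $\{1,\ldots,m\}$ or $\circl_m$ swaps the roles of left and right endpoints, reversing the order in which they appear.
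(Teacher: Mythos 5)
Your proposal is correct and follows exactly the paper's route: the paper proves this theorem by simply stating that it is obtained "immediately" by combining Lemma~\ref{lem:ReprOrder} with Theorem~\ref{thm:Aunique2} (resp.\ Roberts' uniqueness theorem), which is precisely the assembly you carry out. Your additional verifications (that reversal of the geometric order is realized by reflection of the ambient segment or circle, and that the at-most-one-universal-vertex hypothesis of Lemma~\ref{lem:ReprOrder}.2 holds) are exactly the details the paper leaves implicit.
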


\paragraph{Straight and round orientations.}
We conclude with discussing how Theorem~\ref{thm:prop-repr-unique} is related
to the results of Deng, Hell, and Huang~\cite{DengHH96}.
We begin with an overview of some concepts introduced in~\cite{DengHH96}.
Recall that an \emph{orientation} of a graph $G$ is a directed graph
obtained from $G$ by directing each edge. A directed graph obtained
in such a way is called \emph{oriented}.
Let $D$ be an orientation of a graph $G$.
A \emph{straight enumeration} of $D$ is an interval ordering of $\calN[G]$
such that every vertex $v$ splits the interval $N[v]=[v^-,v^+]$
into the set $[v^-,v)$ of in-neighbors of $v$ and into the set
$(v,v^+]$ of out-neighbors of $v$. Similarly, 
a \emph{round enumeration} of $D$ is an arc ordering of $\calN[G]$
such that $[v^-,v)$ and $(v,v^+]$ consist of, respectively, the in- and the out-neighbors
of $v$. If the orientation $D$ admits a straight (resp.\ round) enumeration,
it is called \emph{straight (resp.\ round)}.
Deng, Hell, and Huang~\cite{DengHH96} notice a close connection between
proper interval/arc representations of a graph $G$ and its straight/round orientations:
$G$ is proper interval (resp.\ PCA) if and only if its has a straight (resp.\ round)
orientation. This connection enables obtaining Theorem~\ref{thm:prop-repr-unique}
from the following result in~\cite{DengHH96}.

\begin{theorem}[cf.\ Deng, Hell, and Huang~\cite{DengHH96}]\label{thm:orient-unique}\mbox{}
  \begin{bfenumerate}
  \item 
A twin-free, connected proper interval graph has, up to reversal,
a unique straight orientation.
  \item 
A twin-free, connected PCA graph with non-bipartite or connected complement has, 
up to reversal, a unique round orientation.
  \end{bfenumerate}
\end{theorem}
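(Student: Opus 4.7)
The plan is to derive Theorem~\ref{thm:orient-unique} from Theorem~\ref{thm:prop-repr-unique} by exhibiting a bijection, respecting reversal, between round (resp.\ straight) orientations of~$G$ and sharp proper arc (resp.\ interval) representations of~$G$ modulo rotation of the circle. Once this correspondence is set up, uniqueness of the orientation up to reversal follows immediately from uniqueness of the representation up to rotation and reflection.

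Starting from a sharp proper arc representation~$\alpha$ of~$G$, I would define an orientation~$D_\alpha$ by directing each edge~$uv$ from~$u$ to~$v$ exactly when $\alpha(u)$ contains the left endpoint of~$\alpha(v)$. Under the hypotheses of part~2, the graph~$G$ has no universal vertex (by Lemma~\ref{lem:nouniv} when~$\barG$ is non-bipartite, and because $\barG$ has no isolated vertex when $\barG$ is connected on at least two vertices). Therefore Lemma~\ref{lem:one-endpoint} applies and guarantees that $\alpha(u)$ and~$\alpha(v)$ strictly overlap, so $D_\alpha$ is well defined. A routine check from the definitions shows that~$\prec_\alpha$ is a round enumeration of~$D_\alpha$; in particular $D_\alpha$ is round.

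For the reverse direction, given a round orientation~$D$ of~$G$ with round enumeration $v_1\prec v_2\prec\ldots\prec v_n$, I would build a sharp proper arc representation~$\alpha$ of~$G$ whose geometric order coincides with~$\prec$, by placing $\alpha(v_i)=[a_i,b_i]$ on the circle~$\circl_{2n}$ with endpoints determined from~$\prec$ and from the in-/out-neighborhoods in~$D$ by the formulas derived in the proof of Lemma~\ref{lem:ReprOrder}.2 (used now constructively rather than as a uniqueness check). One then verifies that $u$ and~$v$ are adjacent in~$G$ if and only if the arcs $\alpha(u)$ and~$\alpha(v)$ intersect, and that no arc is contained in another; hence $\alpha$ is a proper arc representation of~$G$ with $\prec_\alpha\,=\,\prec$.

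These two constructions are mutually inverse up to rotation of~$\circl_{2n}$, and reversing the orientation corresponds to reflecting~$\alpha$, so the bijection respects reversal. Part~2 of the theorem then follows from Theorem~\ref{thm:prop-repr-unique}.2; part~1 is proved by the same argument, using Lemma~\ref{lem:ReprOrder}.1 and Theorem~\ref{thm:prop-repr-unique}.1 in place of the circular versions (with intervals the overlap of~$\alpha(u)$ and~$\alpha(v)$ for an edge~$uv$ is automatic, so Lemma~\ref{lem:one-endpoint} is not needed). The technical core of the argument is the reverse construction---checking that the arc system built from a round enumeration is indeed a proper arc model of~$G$ with matching geometric order---which essentially reproduces the Deng--Hell--Huang correspondence between proper arc models and round orientations alluded to just before the theorem statement.
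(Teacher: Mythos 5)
Your proposal is correct and follows essentially the same route as the paper: defining $D_\alpha$ by the left-endpoint rule (well-defined via Lemma~\ref{lem:one-endpoint}), observing that rotating $\alpha$ preserves $D_\alpha$ while reflecting it reverses $D_\alpha$, and concluding from Theorem~\ref{thm:prop-repr-unique}. The only divergence is that where you sketch a constructive inverse (building a sharp arc model from a round enumeration via the formulas of Lemma~\ref{lem:ReprOrder}.2, leaving the verification that it is a proper model of $G$ as the unproved technical core), the paper instead cites \cite[Proposition~2.6]{DengHH96} for the fact that every round orientation equals $D_\alpha$ for some proper arc representation~$\alpha$.
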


Though obtained in our paper and in~\cite{DengHH96} by completely different methods, 
Theorems~\ref{thm:prop-repr-unique} and~\ref{thm:orient-unique}
are equivalent statements of the same nature. For completeness we now
derive the latter result from the former.

\begin{proofof}{Theorem~\ref{thm:orient-unique}}
We show part 2; part 1 is similar.
Following~\cite{DengHH96}, we first notice that a proper arc representation of
a twin-free graph $G$ determines a round orientation of $G$ and vice versa.

Given a proper arc representation $\alpha$ of $G$,
define an orientation $D_\alpha$ of $G$ by directing each edge $\{u,v\}$
as $uv$ if $\alpha(u)$ contains the left endpoint of $\alpha(v)$.
The definition is unambiguous by Lemma~\ref{lem:one-endpoint}.
Note that the geometric order $\prec_\alpha$ is a round enumeration of~$D_\alpha$.

Note also that, if $\alpha'$ is a rotation of $\alpha$, then
$D_{\alpha'}=D_\alpha$. If $\alpha'$ is the reflection of $\alpha$, then
$D_{\alpha'}$ is the reversed version of $D_\alpha$.

As observed in~\cite[Proposition 2.6]{DengHH96} (see also~\cite[Theorem 3.1]{HellH95}),
every round orientation $D$ of $G$ is obtainable in this way,
that is, $D=D_\alpha$ for some proper arc representation $\alpha$ of $G$.
Under the assumptions made for $G$, Theorem~\ref{thm:prop-repr-unique}
states that $\alpha$ is unique up to rotation and reflection.
It follows that $D=D_\alpha$ is unique up to reversal.
\end{proofof}


\begin{thebibliography}{10}




\bibitem{Bang-JHY00}
J.~Bang-Jensen, J.~Huang, and A.~Yeo.
\newblock {Convex-round and concave-round graphs.}
\newblock {\em SIAM J. Discrete Math.}, 13(2):179--193, 2000.

\bibitem{BoothL76}
K.~Booth, G.~Lueker.
\newblock Testing for the consecutive ones property, interval graphs, and graph
  planarity using {PQ}-tree algorithms.
\newblock {\em J.\ Comput.\ Syst. Sci.}, 13:335--379,~1976.


\bibitem{ChenY91}
L.~Chen and Y.~Yesha.
\newblock Parallel recognition of the consecutive ones property with
  applications.
\newblock {\em J. Algorithms}, 12(3):375--392, 1991.



\bibitem{DengHH96}
X.~Deng, P.~Hell, J.~Huang.
\newblock Linear-time representation algorithms for proper circular-arc graphs
  and proper interval graphs.
\newblock {\em SIAM J.\ Comput.}, 25:390--403,~1996.

\bibitem{Dom09}
M.~Dom.
\newblock Algorithmic aspects of the consecutive-ones property.
\newblock {\em Bulletin of the EATCS}, 98:27--59, 2009.


\bibitem{GPZ08}
F.~Gavril, R.~Y.~Pinter, and S.~Zaks.
\newblock
Intersection representations of matrices by subtrees and unicycles on graphs.
\newblock
{\em Journal of Discrete Algorithms}, 6(2):216--228, 2008.

\bibitem{HellH95}
P.~Hell, J.~Huang. 
\newblock
Lexicographic orientation and representation algorithms for comparability graphs, 
proper circular arc graphs, and proper interval graphs. 
\newblock
{\em Journal of Graph Theory}, 20(3):361--374, 1995.

\bibitem{HM03}
W.-L.~Hsu and R.~M.~McConnell.
\newblock {PC} trees and circular-ones arrangements.
\newblock {\em Theoretical Computer Science}, 296(1):99--116, 2003.

\bibitem{Huang95}
J.~Huang. 
\newblock
On the structure of local tournaments. 
{\em J.\ Combin.\ Theory B}, 63:200--221, 1995. 


\bibitem{KoeblerKLV11}
J.~K{\"o}bler, S.~Kuhnert, B.~Laubner, and O.~Verbitsky.
\newblock Interval graphs: {C}anonical representations in {L}ogspace.
\newblock {\em SIAM J. on Computing}, 40(5):1292--1315, 2011.

\bibitem{fsttcs}
J.~K{\"o}bler, S.~Kuhnert, and O.~Verbitsky.
\newblock
Solving the canonical representation and Star System problems
for proper circular-arc graphs in logspace.
\newblock
\newblock In: 
{\em IARCS
Proc.\ of the  Ann.\ Conf.\ on Foundations of Software Technology and Theoretical
  Computer Science},
LIPIcs vol.~18, pp.\ 387--399, 2012. 
A full version is available at {\sl http://arxiv.org/abs/1202.4406}



\bibitem{Moore77}
J.I.~Moore. 
\newblock
Interval hypergraphs and $D$-interval hypergraphs. 
\newblock
{\em Discrete Math.}, 17:173--179, 1977. 


\bibitem{OBS11}
 A.~Ouangraoua, A.~Bergeron, and K.~M. Swenson.
 \newblock Theory and practice of ultra-perfection.
 \newblock {\em J. Comp. Bio.}, 18(9):1219--1230, 2011.



\bibitem{Roberts71}
F.~Roberts.
\newblock On the compatibility between a graph and a simple order.
\newblock {\em Journal of Combinatorial Theory, Series B}, 11(1):28--38, 1971.


\bibitem{Quilliot84}
A.~Quilliot.
\newblock {Circular representation problem on hypergraphs.}
\newblock {\em Discrete Math.}, 51:251--264, 1984.

\bibitem{TrotterM76}
W.T.~Trotter, J.I.~Moore.
\newblock
Characterization problems for graphs,
partially ordered sets, lattices, and families of sets. 
{\em Discrete Math.}, 16:361--381, 1976. 

\bibitem{Tucker71}
A.~Tucker.
\newblock {Matrix characterizations of circular-arc graphs}.
\newblock {\em Pac. J. Math.}, 39:535--545, 1971.


\end{thebibliography}
\end{document}
